\documentclass[journal,draftcls,onecolumn,12pt,twoside]{IEEEtran}
\IEEEoverridecommandlockouts
\usepackage{lineno}
\usepackage{hyperref}
\usepackage{cite}
\usepackage{amsmath,amssymb,amsfonts,cases} 
\usepackage{amsmath}
\usepackage{amsthm}
\DeclareMathOperator*{\argmax}{argmax}

\usepackage{graphicx}
\usepackage{textcomp}
\usepackage{xcolor}
\usepackage{graphicx}
\usepackage{float}
\usepackage{subfigure}
\usepackage{amsmath,mleftright}
\usepackage{amsfonts,amssymb}
\usepackage{mathrsfs}
\usepackage{mathtools}
\usepackage{algorithm}
\usepackage{algorithmic}
\usepackage{bm}
\usepackage{multirow}
\usepackage{array}
\usepackage{amssymb}
\usepackage{amsmath}
\usepackage{cite}
\usepackage{url}
\usepackage{xcolor}
\usepackage{cite,graphicx,amsmath,amssymb}
\usepackage{subfigure}
\usepackage{fancyhdr}
\usepackage{mdwmath}
\usepackage{mdwtab}
\usepackage{caption}
\usepackage{amsthm}
\usepackage{setspace}
\usepackage{bm}
\usepackage{mathtools}
\usepackage{dsfont}
\usepackage{bbm}
\usepackage{framed}
\newtheorem{remark}{Remark}
\newtheorem{theorem}{Theorem}

\newtheorem{lemma}{Lemma}

\newtheorem{corollary}{Corollary}

\makeatletter
\newcommand{\biggg}{\bBigg@{3}}
\newcommand{\Biggg}{\bBigg@{3.5}}
\makeatother
\makeatletter
\renewcommand{\maketag@@@}[1]{\hbox{\m@th\normalsize\normalfont#1}}%
\makeatother
\def\BibTeX{{\rm B\kern-.05em{\sc i\kern-.025em b}\kern-.08em
    T\kern-.1667em\lower.7ex\hbox{E}\kern-.125emX}}
    \expandafter\def\expandafter\normalsize\expandafter{%
    \normalsize%
    \setlength\abovedisplayskip{4pt}%
    \setlength\belowdisplayskip{4pt}%
    \setlength\abovedisplayshortskip{2pt}%
    \setlength\belowdisplayshortskip{2pt}%
}
\allowdisplaybreaks[4]
\begin{document}
\title{Frequency-Selective Pinching-Antenna Systems}
\author{Chongjun~Ouyang, Hao~Jiang, Zhiguo~Ding, and Robert~Schober\vspace{-10pt}
\thanks{C. Ouyang and H. Jiang are with the School of Electronic Engineering and Computer Science, Queen Mary University of London, London, E1 4NS, U.K. (e-mail: \{c.ouyang, hao.jiang\}@qmul.ac.uk).}
\thanks{Z. Ding is with the School of Electrical and Electronic Engineering (EEE), Nanyang Technological University, Singapore 639798 (e-mail: zhiguo.ding@ntu.edu.sg).}
\thanks{R. Schober is with the Institute for Digital Communications, Friedrich-Alexander-Universit\"{a}t Erlangen-N\"{u}rnberg, 91058, Erlangen, Germany (e-mail: robert.schober@fau.de).}}
\maketitle
\begin{abstract}
An uplink pinching-antenna system (PASS) is considered, where a segmented dielectric waveguide is employed to serve multiple users and one pinching antenna (PA) is activated on each segment. Continuous- and discrete-time signal models are developed for pulse-shaped single-carrier transmission. Extensive waveguide deployment reduces user-to-PA path loss but also creates differential propagation delays among the PA-assisted paths spanning the large waveguide aperture. These delays are retained, and a Toeplitz block model is derived to characterize the resulting frequency-selective (FS) channel. For infinite-blocklength transmission, the achievable single-user rate and the multiuser sum-rates under joint and parallel decoding are characterized. For carrier-phase-aligned single-user PA placement, the frequency-flat (FF) model is shown to overestimate the exact FS rate. The difference between the rates computed under the FF channel model and the exact FS channel model is referred to as the \emph{FF rate overestimation}, and its dependence on the waveguide-aperture-induced delay spread is characterized. For small delay spreads, the FF rate overestimation is shown to increase approximately quadratically with the delay spread, whereas a derived lower bound reveals that the overestimation grows at least double logarithmically with the waveguide aperture, or equivalently with the delay spread, in the large-aperture regime. For finite-blocklength transmission, achievable rates are derived under a practical cyclic-prefix single-carrier frequency-domain equalization framework. An element-wise algorithm is developed to optimize FS-aware PA placement in both transmission regimes. The analytical findings are verified numerically. It is demonstrated that differential delays across the extensively deployed waveguide can cause significant frequency selectivity and that the proposed FS-aware PA placement method substantially improves the achievable rate over placement based on the FF approximation, especially for large bandwidths and apertures.
\end{abstract}
\begin{IEEEkeywords}
Antenna placement, frequency selectivity, pinching antennas, segmented waveguide. 
\end{IEEEkeywords}
\section{Introduction}
Reconfigurable antennas have recently attracted substantial interest because they provide new degrees of freedom for wireless channel control \cite{heath2025tri}. Representative technologies include movable and fluid antennas, which alter antenna positions \cite{zhu2024movable,wong2020fluid}, as well as dynamic metasurface antennas and reconfigurable intelligent surfaces, which adjust electromagnetic (EM) responses \cite{shlezinger2021dynamic,wu2025intelligent}. These architectures can exploit spatial diversity and shape radiation patterns with modest radio-frequency (RF) complexity \cite{heath2025tri}. However, the reconfigurability of these antennas is typically confined to a wavelength-scale region. Such local reconfiguration cannot directly overcome the large-scale path loss and blockage that constrain high-frequency transmission, particularly in the upper mid-band envisioned for sixth-generation networks.

The pinching-antenna system (PASS), introduced by NTT DOCOMO, provides a different form of spatial reconfigurability \cite{suzuki2022pinching}. A dielectric waveguide carries the RF signal, and small dielectric particles attached to the waveguide act as pinching antennas (PAs). Each PA couples energy between the guided wave and free space. Its location can be adjusted along the waveguide in a manner analogous to moving a clothespin along a clothesline \cite{liu2025pinching,liu2025pinchingtutorial}. Since a dielectric waveguide may extend over tens or even hundreds of meters, a PA can be activated close to its intended user. This architecture establishes a short and stable line-of-sight (LoS) link and can substantially reduce both path loss and blockage. PASS also supports scalable antenna deployment because PAs can be added, removed, or repositioned without requiring a conventional RF chain at every radiating element.

These properties have motivated extensive research on PASS; recent surveys provide comprehensive overviews \cite{liu2025pinchingsurvey,lu2026survey}. Beyond the original dielectric-waveguide implementation, generalized PASS extends both the transmission medium and the signal-distribution architecture \cite{xu2026generalized}. Examples include leaky-wave coaxial cables \cite{xu2026generalized}, radio stripes \cite{xu2026generalized}, radio-over-fiber links \cite{beas2013millimeter}, and wireless-fed structures \cite{wijewardhana2026wireless}. Another example is the segmented waveguide-enabled pinching-antenna system (SWAN) \cite{ouyang2025uplink}, which replaces one long waveguide with multiple short segments connected to the base station (BS) through calibrated wired links. In the uplink, each segment activates one PA. A signal collected by this PA is not re-radiated through PAs on other segments, so the segmented structure eliminates the inter-antenna radiation (IAR) effect that complicates multi-PA reception on monolithic waveguides. 

Despite the different media and architectures of these generalized PASS implementations, they share the same essential principle: \emph{an extensively deployed transmission medium brings a radiating or receiving point close to the user, which reduces path loss and extends coverage}. This extensive waveguide deployment also creates an important signal-processing challenge. The PA-assisted paths spanning the large waveguide aperture traverse different free-space and in-waveguide distances and therefore arrive with different delays. Their delay spread can become comparable to the symbol interval even when the fractional bandwidth is small. A PASS channel may thus be \emph{frequency selective} (FS) rather than \emph{frequency flat} (FF). This effect is known from electrically large arrays \cite{wang2018spatial}, but PASS combines free-space propagation, guided propagation, and position-dependent coupling. 

Most PASS analyses nevertheless optimize the PA positions under an FF approximation; see \cite{liu2025pinchingsurvey,lu2026survey}. A recent study considered single-user multi-carrier PASS and emphasized frequency selectivity caused by wideband signaling and waveguide dispersion \cite{xiao2025frequency}. However, an in-depth analysis of the frequency selectivity caused by differential delays across the large waveguide aperture formed through extensive deployment remains unavailable. Moreover, a unified treatment of pulse-shaped transmission is needed to quantify the \emph{FF rate overestimation}, i.e., the difference between the rates computed under the FF and exact FS channel models, and to optimize PA placement accordingly. This paper develops such a treatment for an uplink FS PASS based on a segmented waveguide and shows that aperture-induced delays can invalidate the FF model even for moderate bandwidths. Single-carrier signaling is adopted, while higher-order waveguide dispersion is excluded, to isolate the frequency selectivity caused by aperture-induced differential propagation delays from multi-carrier effects and waveguide dispersion. The main contributions are summarized as follows.
\begin{itemize}
\item We consider an SWAN-based multiuser uplink channel and establish continuous- and discrete-time models for pulse-shaped single-carrier PASS transmission. This setting isolates the frequency selectivity created by differential path delays across the large waveguide aperture formed through extensive PA deployment. The proposed models retain these delays and the memory introduced by the bandlimited pulse. They yield a Toeplitz block model that reduces to the conventional FF model only when the aperture-induced delay spread is negligible.
\item We characterize the infinite-blocklength achievable rates for both single-user and multiuser transmission. For one user, Toeplitz asymptotics and Szeg{\H{o}}'s theorem convert the large-block mutual information (MI) into a frequency-domain integral. We then propose an element-wise algorithm to optimize the PA positions under the exact FS model. For carrier-phase-aligned PA placement, we prove that the FF model overestimates the exact FS rate. When the aperture-induced delay spread is small, we show that the FF rate overestimation increases approximately quadratically with the delay spread. We further derive a lower bound on the FF rate overestimation and use it to prove that the exact overestimation grows at least double logarithmically with the waveguide aperture, or equivalently with the delay spread, in the large-aperture regime. For multiple users, we derive achievable sum-rate expressions under joint and parallel decoding and design FS-aware PA placement separately for each decoding strategy to maximize the corresponding sum-rate.  
\item We characterize finite-block cyclic-prefix single-carrier frequency-domain equalization (CP-SC-FDE) transmission over the FS PASS channel. The required cyclic prefix accounts for both the physical path-delay spread and the effective duration of the bandlimited pulse. We derive closed-form rate expressions for single-user reception and multiuser parallel decoding. These expressions remain explicit functions of the PA positions and therefore support FS-aware PA placement optimization. Their large-block limits also reveal the rate loss caused by practical SC-FDE processing.
\item We present numerical results to verify the analytical scaling laws and demonstrate that: \romannumeral1) differential delay spread across an extensively deployed waveguide can produce appreciable frequency selectivity in practical PASS channels, which establishes frequency selectivity as a fundamental modeling and design consideration; \romannumeral2) FS-aware PA placement consistently outperforms placement based on the FF approximation, with a gain that increases with the bandwidth or aperture; and \romannumeral3) this gain is especially significant for multiuser parallel decoding.
\end{itemize}

The remainder of this paper is organized as follows. Section \ref{Section: System Model} introduces the FS PASS model. Section \ref{Section:Infinite-Blocklength Analysis} studies infinite-blocklength transmission, and Section \ref{Section: Finite-Block CP-SC-FDE} develops the finite-block CP-SC-FDE framework. Section \ref{Section:Numerical Results} presents numerical results, and Section \ref{Section: Conclusion} concludes the paper.

\noindent\emph{Notation:} Scalars, vectors, and matrices are denoted by nonbold, bold lowercase, and bold uppercase letters, respectively. For a matrix $\mathbf A$, $\mathbf A^{\mathsf T}$, $\mathbf A^*$, $\mathbf A^{\mathsf H}$, $\mathbf A^{-1}$, and $[{\mathbf{A}}]_{i,j}$ denote its transpose, conjugate, conjugate transpose, inverse, and $(i,j)$th element, respectively. Moreover, $\mathsf{diag}(\mathbf a)$ forms a diagonal matrix from $\mathbf a$, while $\mathbf F_N$, $\mathbf I_N$, and $\mathbf 0$ denote the $N\times N$ discrete Fourier transform (DFT) matrix, identity matrix, and zero matrix, respectively. The sets $\mathbb{C}$, $\mathbb{R}$, and $\mathbb{Z}$ denote the complex numbers, real numbers, and integers, respectively. The symbols $\delta[\cdot]$ and $\delta(\cdot)$ denote the Kronecker and Dirac delta functions, respectively. The operators $\mathcal F(\cdot)$, $\mathcal F^{-1}(\cdot)$, and $\ast$ denote the Fourier transform, inverse Fourier transform, and convolution, respectively. Finally, $[N]\triangleq\{1,\ldots,N\}$, and $\mathcal{CN}(\bm\mu,\mathbf X)$ denotes the circularly symmetric complex Gaussian distribution with mean $\bm\mu$ and covariance $\mathbf X$.

\begin{figure}[!t]
\centering
    \subfigure[System setup.]
    {
        \includegraphics[width=0.45\textwidth]{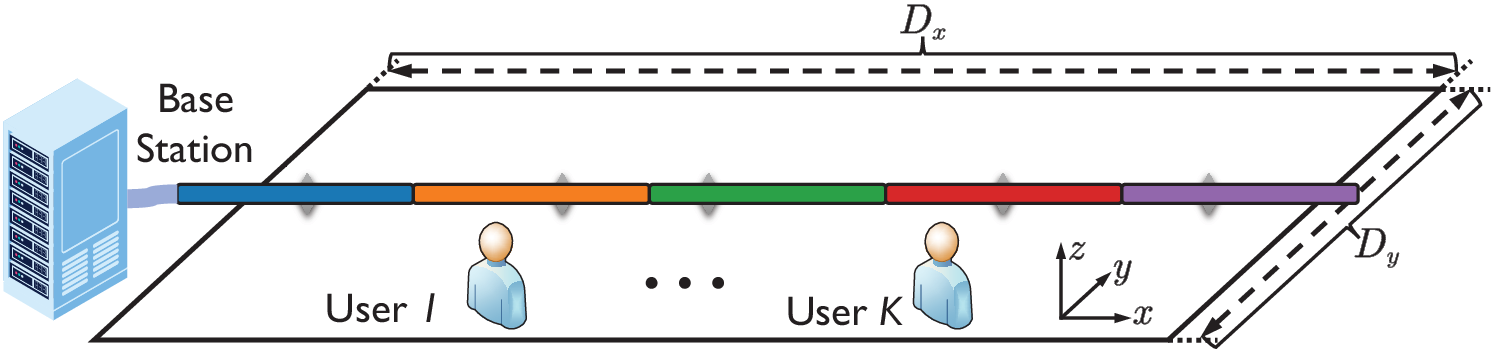}
        \label{Figure: SWAN_System_Model}
    }
    \subfigure[Segmented waveguide.]
    {
        \includegraphics[width=0.45\textwidth]{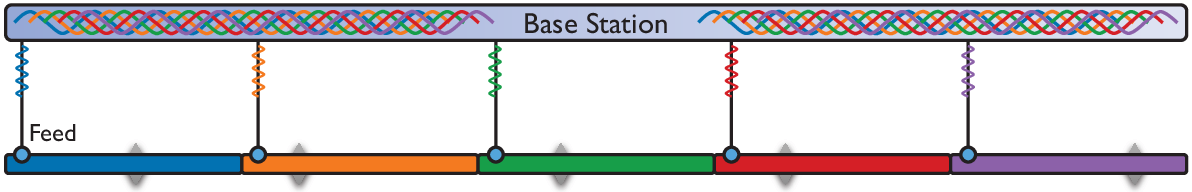}
        \label{Figure: SWAN_System_Model1}
    }
\caption{Illustration of the segmented-waveguide system \cite{ouyang2025uplink}.}
\label{Figure: New_SWAN_System_Model}
\vspace{-10pt}
\end{figure}

\section{System Model}\label{Section: System Model}
Consider an uplink PASS in which a BS employs a segmented waveguide to serve $K$ single-antenna users in a rectangular region of size $D_x\times D_y$ in the $xy$-plane. The position of user $k$ is ${\mathbf{u}}_k=[u_k^x,u_k^y,0]^{\mathsf T}$ for $k\in[K]$. The waveguide is deployed along the $x$-axis at height $d$ and provides horizontal coverage. PAs placed along the waveguide extract the uplink signals, as shown in {\figurename} {\ref{Figure: SWAN_System_Model}}.

The waveguide comprises $M$ dielectric segments of length $L$, such that $D_x=ML$. Let ${\bm\psi}_0^m\triangleq[\psi_0^m,0,d]^{\mathsf T}$ denote the feed point of segment $m$ for $m\in[M]$, where $\psi_0^1<\psi_0^2<\cdots<\psi_0^M$. For simplicity, each feed point lies at the front-left end of its segment. To focus on inter-segment frequency selectivity and eliminate the IAR effect, only one PA is active on each segment \cite{ouyang2025uplink}. The PA position on segment $m$ is ${\bm\psi}_m\triangleq[\psi_m,0,d]^{\mathsf T}$ and satisfies
\begin{align}\nonumber
&\psi_{0}^{m}\leq \psi_{m}\leq \psi_{0}^{m}+L,\quad m\in[M],\\
&\lvert\psi_{m}-\psi_{m'}\rvert\geq \Delta,\quad m,m'\in[M],m\ne m',\nonumber
\end{align}
where $\Delta>0$ is the minimum inter-antenna spacing used to mitigate mutual coupling. Each PA extracts the incident uplink signal through EM coupling \cite{wang2025modeling}. The extracted signal propagates through the corresponding dielectric segment to its feed point and is then conveyed from the feed point to the BS through a wired transport link; see {\figurename} {\ref{Figure: SWAN_System_Model1}}. An analog radio-over-fiber link provides one practical implementation \cite{beas2013millimeter}. Although these wired links can be physically long and therefore have non-negligible absolute propagation delays, their end-to-end delays can be calibrated and matched across the segments. We therefore assume that all feed-point-to-BS links have the same end-to-end propagation delay. This common delay is absorbed into the receiver timing reference and does not contribute to the differential delay spread considered in this paper. Moreover, the distance-dependent attenuation of the wired links is assumed to be much smaller than the in-waveguide attenuation over the considered deployment range and is therefore neglected \cite{beas2013millimeter}. Fixed conversion and connector losses are assumed to be calibrated and absorbed into the overall channel-gain coefficient.
\subsection{Signal Model}
We consider pulse-shaped single-carrier transmission. For a blocklength $N$, the complex baseband waveform transmitted by user $k\in[K]$ is given by
\begin{align}\label{Finite_Waveform}
x_{k,N}(t)=\sum_{n=1}^{N}s_k[n]p(t-nT_{\rm s}),
\end{align}
where $s_k[n]\in{\mathbb{C}}$ is the $n$th information-bearing symbol of user $k$, $p(t)$ is the transmit pulse with unit energy, and $T_{\rm s}$ is the symbol interval. To focus on the aperture-induced frequency selectivity while retaining analytical tractability, we adopt minimum-bandwidth Nyquist signaling with $p(t)={\sqrt{B}}\operatorname{sinc}(Bt)$, where $B=\frac{1}{T_{\rm{s}}}$ is the signal bandwidth and $\operatorname{sinc}(x)\triangleq\frac{\sin(\pi x)}{\pi x}$. Accordingly, the Fourier transform of the transmit pulse is $P(f)\triangleq{\mathcal F}(p(t))=\frac{1}{\sqrt{B}}\operatorname{rect}(\frac{f}{B})$, where $\operatorname{rect}(\cdot)$ denotes the rectangular function. The corresponding pulse autocorrelation is $c_p(t)\triangleq\int_{-\infty}^{\infty}p(u)p^*(u-t){\rm d}u=\operatorname{sinc}(Bt)$, which satisfies $c_p(qT_{\rm s})=\delta[q]$ for $q\in\mathbb{Z}$. Hence, the symbol rate $\frac{1}{T_{\rm{s}}}=B$ equals the Nyquist sampling rate of the adopted complex-baseband waveform.\footnote{For a square-root-raised-cosine (SRRC) pulse with roll-off factor $\beta>0$, the occupied bandwidth satisfies $B=\frac{1+\beta}{T_{\rm s}}$. In this case, symbol-rate sampling of the transmit-pulse matched-filter output is generally not sufficient for an arbitrary FS channel. An information-preserving representation can instead be obtained through a fractionally spaced or equivalent polyphase receiver \cite{gallager2008principles}. The resulting sufficient-statistic model can be represented using Toeplitz or block-Toeplitz structures, with a matrix-valued generating symbol in the polyphase formulation \cite{gazzah2001asymptotic}. The aperture-induced differential-delay mechanism remains unchanged. For a fixed roll-off factor, the small-delay-spread and large-aperture scaling behaviors of the FF rate overestimation are also preserved, although the corresponding coefficients and bounds become pulse dependent. We focus on the minimum-bandwidth sinc case to retain a concise scalar formulation. A detailed treatment of the SRRC extension is left for future work.} The corresponding real-valued passband RF waveform can be written as follows:
\begin{align}\label{RF_Finite_Waveform}
x_{k,{\rm RF},N}(t)=\Re\left\{x_{k,N}(t)e^{{\rm j}2\pi f_{\rm c}t}\right\},
\end{align}
where $f_{\rm c}$ is the carrier frequency.

For the path from user $k$ to the BS through PA $m$, let $\alpha_{k,m}(\psi_m)\in\mathbb{R}_+$ and $\tau_{k,m}(\psi_m)\in\mathbb{R}_+$ denote the effective gain and propagation delay, respectively. Both quantities depend on the PA position. The equivalent continuous-time channel impulse response of user $k$ is given by
\begin{align}\label{Channel_Impulse}
h_k(t;{\bm\psi})=\sum_{m=1}^{M}\alpha_{k,m}(\psi_m)\delta(t-\tau_{k,m}(\psi_m)),
\end{align}
where ${\bm\psi}\triangleq[\psi_1,\ldots,\psi_M]^{\mathsf T}$ is the PA position vector.

The received RF signal at the BS front end is given by
\begin{align}\label{RF_Received_Signal_Form_1}
y_{{\rm RF},N}(t)=\sum_{k=1}^{K} h_k(t;{\bm\psi})\ast x_{k,{\rm RF},N}(t)+z_{{\rm RF},N}(t),
\end{align}
where $z_{{\rm RF},N}(t)$ denotes the real-valued passband additive white Gaussian noise (AWGN). By substituting \eqref{RF_Finite_Waveform} and \eqref{Channel_Impulse} into \eqref{RF_Received_Signal_Form_1}, we obtain
\begin{align}
&y_{{\rm RF},N}(t)=\Re\left\{\sum_{k=1}^{K}\sum_{m=1}^{M}x_{k,N}(t-\tau_{k,m}(\psi_m))\right.\nonumber\\
&\times\left.\alpha_{k,m}(\psi_m)e^{-{\rm j}2\pi f_{\rm c}\tau_{k,m}(\psi_m)}e^{{\rm j}2\pi f_{\rm c}t}\right\}+z_{{\rm RF},N}(t).\nonumber
\end{align}
After coherent downconversion and low-pass filtering, the received equivalent complex baseband signal is
\begin{align}
y_{N}(t)&=\sum_{k=1}^{K}\sum_{m=1}^{M}x_{k,N}(t-\tau_{k,m}(\psi_m))\alpha_{k,m}(\psi_m)e^{-{\rm j}2\pi f_{\rm c}\tau_{k,m}(\psi_m)}+z_{N}(t)\nonumber\\
&=\sum_{k=1}^{K}x_{k,N}(t)\ast g_k(t;{\bm\psi})+z_N(t),\label{Baseband_Receive_Signal_sub2}
\end{align}
where $z_N(t)$ is the equivalent complex baseband AWGN with two-sided power spectral density (PSD) $N_0$, and
\begin{align}\nonumber
g_k(t;{\bm\psi})\triangleq\sum_{m=1}^{M}\alpha_{k,m}(\psi_m)e^{-{\rm j}2\pi f_{\rm c}\tau_{k,m}(\psi_m)}\delta(t-\tau_{k,m}(\psi_m))
\end{align}
is the equivalent complex baseband channel impulse response of user $k$.

\subsection{Channel Model and Frequency Selectivity}
PASS is envisioned for high-frequency operation \cite{suzuki2022pinching}, for which LoS propagation is often dominant \cite{ouyang2024primer}. We therefore adopt a free-space LoS model for the over-the-air link. The path from user $k$ to the BS through PA $m$ comprises \romannumeral1) \emph{free-space propagation} from the user to the PA and \romannumeral2) \emph{guided propagation} from the PA to the feed point of its dielectric segment. The signal retains the carrier frequency $f_{\rm c}$ throughout this path, while the dielectric waveguide changes its propagation constant and propagation velocity. Let $\lambda=\frac{c}{f_{\rm c}}$ denote the free-space wavelength, where $c$ is the speed of light. We assume that the effective refractive index of the waveguide is approximately constant over the signal band. Accordingly, the propagation constant is approximately linear in frequency, and guided propagation is modeled with an approximately constant propagation velocity $\frac{c}{n_{\rm eff}}$, where $n_{\rm eff}$ denotes the effective refractive index \cite{yeh2008essence}. The propagation delay over a guided distance $L$ is therefore approximated by $\frac{n_{\rm eff}L}{c}$. Higher-order frequency dependence of the propagation constant, which gives rise to waveguide dispersion, is omitted to isolate the frequency selectivity caused by aperture-induced differential propagation delays. Such dispersion may become relevant for much wider bandwidths or operation near the waveguide cutoff and would introduce an additional source of frequency selectivity \cite{liu2025pinchingtutorial}.

Under this model, the effective gain associated with the $m$th PA for user $k$ is given by
\begin{align}
\alpha_{k,m}(\psi_m)=
\underbrace{\eta^{\frac{1}{2}}\frac{1}{\|{\mathbf u}_k-{\bm\psi}_m\|}}_{\text{free-space propagation}}\times
\underbrace{10^{-\frac{\kappa}{20}\|{\bm\psi}_m-{\bm\psi}_0^m\|}}_{\text{in-waveguide attenuation}},
\end{align}
where $\eta\triangleq \frac{c^2}{16\pi^2 f_{\rm c}^2}$ and $\kappa$ is the average dielectric-waveguide attenuation in dB/m \cite{yeh2008essence}. The total propagation delay is
\begin{align}\label{delay_model_basic}
\tau_{k,m}(\psi_m)=
\underbrace{\frac{\|{\mathbf u}_k-{\bm\psi}_m\|}{c}}_{\text{free-space delay}}
+
\underbrace{\frac{n_{\rm eff}\|{\bm\psi}_m-{\bm\psi}_0^m\|}{c}}_{\text{in-waveguide delay}}.
\end{align}

The Fourier transform of $g_k(t;{\bm\psi})$ gives the following frequency response:
\begin{align}
G_k(f;{\bm\psi})
&\triangleq {\mathcal F}(g_k(t;{\bm\psi}))=\sum_{m=1}^{M}\alpha_{k,m}(\psi_m) e^{-{\rm j}2\pi f_{\rm c}\tau_{k,m}(\psi_m)}
e^{-{\rm j}2\pi f\tau_{k,m}(\psi_m)}.\label{Frequency_Response_k}
\end{align}
Define $\tau_{k,\max}({\bm\psi})\triangleq \max_{m}\tau_{k,m}(\psi_m)$, $\tau_{k,\min}({\bm\psi})\triangleq \min_{m}\tau_{k,m}(\psi_m)$, $\Delta_{k,m}({\bm\psi})\triangleq \tau_{k,m}(\psi_m)-\tau_{k,\min}({\bm\psi})$, and the system-wide delay spread $\Delta\tau_{\rm sys}({\bm\psi})\triangleq\max_k\{\tau_{k,\max}({\bm\psi})-\tau_{k,\min}({\bm\psi})\}$. Equation \eqref{Frequency_Response_k} becomes
\begin{align}
G_k(f;{\bm\psi})
&=e^{-{\rm j}2\pi f\tau_{k,\min}({\bm\psi})}
\sum_{m=1}^{M}\alpha_{k,m}(\psi_m) e^{-{\rm j}2\pi f_{\rm c}\tau_{k,m}(\psi_m)}
e^{-{\rm j}2\pi f\Delta_{k,m}({\bm\psi})}.\nonumber
\end{align}

If $B\Delta\tau_{\rm sys}({\bm\psi})\ll1$, then $e^{-{\rm j}2\pi f\Delta_{k,m}({\bm\psi})}\approx 1$ for all $f\in[-\frac{B}{2},\frac{B}{2}]$, $k\in[K]$, and $m\in[M]$. It follows that
\begin{align}\label{Frequency_Flat_Channel_Impulse_Response}
G_k(f;{\bm\psi})
\approx
e^{-{\rm j}2\pi f\tau_{k,\min}({\bm\psi})}
h_{k,{\rm eff}}({\bm\psi}),
\end{align}
where
\begin{align}\nonumber
h_{k,{\rm eff}}({\bm\psi})\triangleq \sum_{m=1}^{M}\alpha_{k,m}(\psi_m)e^{-{\rm j}2\pi f_{\rm c}\tau_{k,m}(\psi_m)}.
\end{align}
This is the usual narrowband condition: the signal bandwidth is much smaller than the coherence bandwidth associated with the path-delay spread. Under \eqref{Frequency_Flat_Channel_Impulse_Response}, the response contains only a common linear phase term $e^{-{\rm j}2\pi f\tau_{k,\min}({\bm\psi})}$ over the signal band, and its magnitude is approximately constant. The channel is therefore FF with effective coefficient $h_{k,{\rm eff}}({\bm\psi})$. Hence,
\begin{align}\nonumber
g_k(t;{\bm\psi})={\mathcal F}^{-1}(G_k(f;{\bm\psi}))
\approx
h_{k,{\rm eff}}({\bm\psi})\delta(t-\tau_{k,\min}({\bm\psi})).
\end{align}
Accordingly, the received signal admits the following approximation:
\begin{align}\label{Frequency_Flat_Signal_Model}
y_N(t)\approx \sum_{k=1}^{K} h_{k,{\rm eff}}({\bm\psi})x_{k,N}(t-\tau_{k,\min}({\bm\psi}))+z_N(t).
\end{align}
After compensation for the user-specific delay $\tau_{k,\min}({\bm\psi})$, matched filtering and symbol-rate sampling yield the conventional FF discrete-time model. This narrowband case has been widely studied in the PASS literature \cite{liu2025pinchingtutorial,liu2025pinchingsurvey}. The approximation simplifies both performance analysis \cite{ding2024flexible,ouyang2025array} and PA placement \cite{xu2024rate,ouyang2025uplink}; closed-form or low-complexity solutions are available for several special cases.

When the waveguide spans a large region or the PAs are widely separated, $B\Delta\tau_{\rm sys}({\bm\psi})\not\ll1$ may instead hold. The approximation in \eqref{Frequency_Flat_Channel_Impulse_Response} then fails, and the channel response can vary significantly over frequency. The received signal no longer reduces to \eqref{Frequency_Flat_Signal_Model}. Both the achievable-rate analysis and the PA placement design must therefore use the exact FS model in \eqref{Baseband_Receive_Signal_sub2}.
\subsection{Discrete-Time Equivalent Model}
Under the adopted minimum-bandwidth sinc signaling model, transmit-pulse matched filtering followed by symbol-rate sampling yields a sufficient discrete-time representation of \eqref{Baseband_Receive_Signal_sub2}. Pass the received signal through $p^*(-t)$ and define $r_N(t)\triangleq y_N(t)\ast p^*(-t)$. Equations \eqref{Finite_Waveform} and \eqref{Baseband_Receive_Signal_sub2} give
\begin{align}
r_N(t)&=\sum_{k=1}^{K}\sum_{n=1}^{N}s_k[n]\sum_{m=1}^{M}e^{-{\rm j}2\pi f_{\rm c}\tau_{k,m}(\psi_m)}\alpha_{k,m}(\psi_m)
 c_p(t-nT_{\rm s}-\tau_{k,m}(\psi_m))+w_N(t),\label{eq:mf_output_expanded}
\end{align}
where $w_N(t)\triangleq z_N(t)\ast p^*(-t)$ is the filtered Gaussian noise. Since $z_N(t)$ is white with two-sided PSD $N_0$, the autocorrelation of $w_N(t)$ is ${\mathbb E}[w_N(t)w_N^*(t')]=N_0c_p(t-t')$. Sampling \eqref{eq:mf_output_expanded} at $t=\hat n T_{\rm s}$ for $\hat n\in[N]$, and defining $w_N[\hat n]\triangleq w_N(\hat n T_{\rm s})$ and $r_N[\hat n]\triangleq r_N(\hat n T_{\rm s})$, yields
\begin{align}\label{eq:sampled_received_1}
r_N[\hat n]=\sum_{k=1}^{K}\sum_{n=1}^{N}s_k[n]\hat g_k((\hat n-n)T_{\rm s};{\bm\psi})+w_N[\hat n],
\end{align}
where $\hat g_k(t;{\bm\psi})\triangleq g_k(t;{\bm\psi})\ast c_p(t)$ denotes the effective continuous-time channel pulse of user $k$, given by
\begin{align}
\hat g_k(t;{\bm\psi})=\sum_{m=1}^{M}\alpha_{k,m}(\psi_m)e^{-{\rm j}2\pi f_{\rm c}\tau_{k,m}(\psi_m)}c_p(t-\tau_{k,m}(\psi_m)).\nonumber
\end{align}
Thus, matched filtering and symbol-rate sampling convert the continuous-time segmented-waveguide channel into a discrete-time scalar linear time-invariant Gaussian channel with memory. Its effective taps are $\{\hat g_k[\ell;{\bm\psi}]\triangleq\hat g_k(\ell T_{\rm s};{\bm\psi})\}_{\ell\in\mathbb{Z}}$. Since $z_N(t)$ is Gaussian, its filtered and sampled version is jointly Gaussian. Moreover, the root-Nyquist condition gives ${\mathbb E}[w_N[n]w_N^*[n']]=N_0\delta[n-n']$. The samples are therefore uncorrelated and hence independent. Consequently, ${\mathbf w}_N\triangleq[w_N[1],\ldots,w_N[N]]^{\mathsf T}\sim{\mathcal{CN}}({\mathbf 0},N_0{\mathbf I}_N)$.

Define ${\mathbf s}_{k,N}\triangleq[s_k[1],\ldots,s_k[N]]^{\mathsf T}$ and ${\mathbf r}_N\triangleq[r_N[1],\ldots,r_N[N]]^{\mathsf T}$. Then, \eqref{eq:sampled_received_1} can be written in vector form as follows:
\begin{align}\label{finite_block_vector_model}
{\mathbf r}_N=\sum_{k=1}^{K}{\mathbf G}_{k,N}({\bm\psi}){\mathbf s}_{k,N}+{\mathbf w}_N,
\end{align}
where ${\mathbf G}_{k,N}({\bm\psi})$ is the Toeplitz matrix generated by $\{\hat g_k(\ell T_{\rm s};{\bm\psi})\}_{\ell\in\mathbb{Z}}$, with entries $[{\mathbf G}_{k,N}({\bm\psi})]_{\hat n,n}=\hat g_k((\hat n-n)T_{\rm s};{\bm\psi})$ for $\hat n,n\in[N]$. Toeplitz matrix theory \cite{gray2003asymptotic,gray2006toeplitz} characterizes the asymptotic eigenvalue distribution of ${\mathbf{G}}_{k,N}({\bm\psi})$ through its \emph{generating symbol}, or discrete-time Fourier transform, $\hat G_k({\rm e}^{{\rm j}\omega};{\bm\psi})\triangleq\sum_{\ell\in\mathbb{Z}}\hat g_k(\ell T_{\rm s};{\bm\psi})e^{{\rm j}\omega\ell}$ for $\omega\in[-\pi,\pi]$.

We assume that the transmitted symbols are independent across users and independent and identically distributed (i.i.d.) across time, with $s_k[n]\sim\mathcal{CN}(0,P_k)$, where $P_k$ denotes the average symbol energy. Hence,
${\mathbf s}_{k,N}\sim\mathcal{CN}({\mathbf 0},P_k{\mathbf I}_N)$, where the covariance matrix $P_k{\mathbf I}_N$ is Toeplitz and has the constant generating symbol
$S_{k,{\rm a}}(\omega)=P_k$ for $\omega\in[-\pi,\pi]$.

We first use this model to characterize the information-theoretic limit as $N\to\infty$. We then study finite-block CP-SC-FDE transmission and the corresponding PA-placement design.
\section{Infinite-Blocklength Analysis}\label{Section:Infinite-Blocklength Analysis}
This section assumes $N\to\infty$. We first study a single user and omit the user index $k$, then we extend the results to multiple users.
\subsection{Single-User Case}
For the single-user case, the finite-block MI is given by
\begin{align}\label{eq:finite_block_mi}
I({\mathbf s}_N;{\mathbf r}_N)=\log_2\det\left({\mathbf I}_N+{\mathbf A}_N({\bm\psi})\right),
\end{align}
where ${\mathbf A}_N({\bm\psi})\triangleq \frac{P}{N_0}{\mathbf G}_N({\bm\psi}){\mathbf G}_N^{\mathsf H}({\bm\psi})$. The asymptotic achievable rate, in bits per channel use, can be expressed as follows:
\begin{align}\nonumber
{\mathsf R}({\bm\psi})=\lim_{N\rightarrow\infty}\frac{1}{N}I({\mathbf s}_N;{\mathbf r}_N).
\end{align}
It remains to characterize the normalized log-determinant in \eqref{eq:finite_block_mi}. The main result is summarized in Theorem \ref{thm:single_user_asymptotic_rate}.
\vspace{-10pt}
\begin{theorem}\label{thm:single_user_asymptotic_rate}
Under the adopted signaling model, the asymptotic achievable rate is given by
\begin{align}\nonumber
R({\bm\psi})
=
\frac{1}{B}
\int_{-\frac{B}{2}}^{\frac{B}{2}}
\log_2\left(
1+\gamma_0|G(f;{\bm\psi})|^2
\right){\rm d}f ,
\end{align}
where $\gamma_0\triangleq\frac{P}{N_0}$ denotes the signal-to-noise ratio (SNR).
\end{theorem}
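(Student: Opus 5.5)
The plan is to apply Szeg\H{o}'s theorem to the Hermitian Toeplitz-like matrix ${\mathbf A}_N({\bm\psi})$ and then convert the resulting integral over normalized frequency $\omega$ into one over physical frequency $f$.

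\textbf{Step 1: compute the generating symbol.} The tap sequence $\hat g[\ell]=\hat g(\ell T_{\rm s};{\bm\psi})$ consists of samples of $g\ast c_p$. Since $c_p(t)=\operatorname{sinc}(Bt)$ has Fourier transform $\frac{1}{B}\operatorname{rect}(\frac{f}{B})$, the continuous-time pulse $\hat g$ is bandlimited to $[-\frac B2,\frac B2]$ with spectrum $G(f;{\bm\psi})\frac{1}{B}\operatorname{rect}(\frac fB)$. Poisson summation therefore gives
\begin{align}\nonumber
\hat G({\rm e}^{{\rm j}\omega};{\bm\psi})=B\sum_{q\in\mathbb Z}\frac1B\operatorname{rect}\Big(\frac{f_\omega-qB}{B}\Big)G(f_\omega-qB;{\bm\psi}),
\end{align}
where $f_\omega$ is related to $\omega$ by $f_\omega=\mp\frac{\omega B}{2\pi}$, the sign fixed by the paper's convention $e^{+{\rm j}\omega\ell}$. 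There is no aliasing, so $\hat G({\rm e}^{{\rm j}\omega};{\bm\psi})=G(f_\omega;{\bm\psi})$ for $\omega\in(-\pi,\pi)$. I would note that the taps decay like $1/|\ell|$. They are square-summable but not absolutely summable, although the symbol is bounded, since $|G|\le\sum_m\alpha_m$.

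\textbf{Step 2: pass to the limit of the log-determinant.} Write $\log_2\det({\mathbf I}_N+\gamma_0{\mathbf G}_N{\mathbf G}_N^{\mathsf H})$. The matrix ${\mathbf G}_N{\mathbf G}_N^{\mathsf H}$ is not exactly Toeplitz, but it is asymptotically equivalent, in the Gray sense, to the Toeplitz matrix ${\mathbf T}_N(|\hat G|^2)$. This follows from the product theorem for Toeplitz matrices with bounded symbols in the Wiener/$L^\infty$ class. Here the symbol is a finite sum of exponentials restricted to a period, so it is bounded with square-summable coefficients, and Gray's results for bounded symbols apply. These results give uniformly bounded spectral norms and a vanishing normalized Hilbert--Schmidt difference. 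Then the asymptotic eigenvalue distribution theorem (Szeg\H{o}) applied to $F(x)=\log_2(1+\gamma_0 x)$, which is continuous on the bounded eigenvalue range $[0,\gamma_0\sup|G|^2]$, yields
\begin{align}\nonumber
\lim_{N\to\infty}\frac1N\log_2\det({\mathbf I}_N+{\mathbf A}_N)=\frac{1}{2\pi}\int_{-\pi}^{\pi}\log_2\big(1+\gamma_0|\hat G({\rm e}^{{\rm j}\omega};{\bm\psi})|^2\big){\rm d}\omega .
\end{align}
An alternative that avoids the product issue is to use the singular values of ${\mathbf G}_N$ directly via the Avram--Parter theorem, since $\det({\mathbf I}+\gamma_0{\mathbf G}{\mathbf G}^{\mathsf H})=\prod_i(1+\gamma_0\sigma_i^2)$.

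\textbf{Step 3: change variables.} Substituting $f=\frac{\omega B}{2\pi}$ and using $|\hat G|=|G(f)|$, with the sign irrelevant after taking the modulus and integrating over a symmetric interval, gives $\frac1B\int_{-B/2}^{B/2}\log_2(1+\gamma_0|G(f;{\bm\psi})|^2){\rm d}f$.

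The main obstacle is Step 2: rigorously justifying Szeg\H{o}'s theorem for ${\mathbf G}_N{\mathbf G}_N^{\mathsf H}$, whose generating sequence is only $\ell^2$. The symbol is discontinuous at $\pm\pi$ unless $G(\frac B2)=G(-\frac B2)$. I would handle this by invoking the Avram--Parter or Tyrtyshnikov singular-value distribution result, which only requires the symbol to lie in $L^2$, or $L^\infty$ for bounded test functions. The continuity of $F$ on the bounded range then suffices.
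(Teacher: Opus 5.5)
Your proposal is correct and follows essentially the same route as the paper: Gray's product theorem shows that $\mathbf{G}_N\mathbf{G}_N^{\mathsf H}$ is asymptotically equivalent to the Toeplitz matrix with symbol $|\hat G|^2$, Szeg\H{o}'s theorem is applied with $F(x)=\log_2(1+x)$, and Poisson summation without aliasing identifies $\hat G(\mathrm{e}^{\mathrm{j}\omega};\bm\psi)$ with $G(f;\bm\psi)$ before the change of variables. Your additional points are valid refinements that the paper's proof glosses over: the sign issue from the $\mathrm{e}^{+\mathrm{j}\omega\ell}$ convention, and the fact that the taps are only square-summable, so the Wiener-class product theorem the paper cites needs an $L^\infty$ extension or your Avram--Parter singular-value fallback.
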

\vspace{-10pt}
\begin{IEEEproof}
See Appendix \ref{proof_thm:single_user_asymptotic_rate}.
\end{IEEEproof}
\subsubsection{PA Placement Optimization}
The rate ${\mathsf R}({\bm\psi})$ depends on the PA positions through the geometry-dependent gains and delays. The PA placement problem is formulated as follows:
\begin{align}\label{problem_single_carrier_capacity}
\max_{{\bm\psi}\in{\mathcal X}}~{\mathsf R}({\bm\psi}),
\end{align}
where the feasible set is given by
\begin{align}\nonumber
{\mathcal X}\triangleq\left\{{\bm\psi}\left|
\begin{matrix}
\psi_m\in[\psi_0^m,\psi_0^m+L], m\in[M],\\
|\psi_m-\psi_{m'}|\geq\Delta, m\neq m'
\end{matrix}
\right.\right\}.
\end{align}

We solve the nonconvex problem in \eqref{problem_single_carrier_capacity} through element-wise alternating optimization. Each step optimizes $\psi_m$ with all other PA positions fixed:
\begin{align}
\max_{\psi_m}~&{\mathsf R}^{(m)}(\psi_m)
\triangleq
\frac{1}{B}\int_{-\frac{B}{2}}^{\frac{B}{2}}
\log_2(1+\gamma_0|G^{(m)}(f,\psi_m)|^2){\rm d}f\nonumber\\
 {\rm s.t.}~&\psi_m\in[\psi_0^m,\psi_0^m+L],\ |\psi_m-\psi_{m'}|\geq\Delta,\ m\neq m',\nonumber
\end{align}
where $G^{(m)}(f,\psi_m)\triangleq \tilde{G}^{(m)}(f)+\alpha_m(\psi_m)e^{-{\rm j}2\pi(f_{\rm c}+f)\tau_m(\psi_m)}$ and $\tilde{G}^{(m)}(f)\triangleq\sum_{m'\neq m}\alpha_{m'}(\psi_{m'})e^{-{\rm j}2\pi(f_{\rm c}+f)\tau_{m'}(\psi_{m'})}$. Because the objective depends on both the free-space phase and the guided-wave phase, ${\mathsf R}^{(m)}(\psi_m)$ is highly oscillatory in space \cite{wang2025modeling,xu2026pinching,liu2025pinchingtutorial}. As a result, standard gradient-based methods are prone to poor local optima. We therefore adopt a one-dimensional grid search over the feasible interval. Specifically, the interval $[\psi_0^m,\psi_0^m+L]$ is discretized into a uniform $Q$-point grid:
\begin{align}\nonumber
{\mathcal Q}_m\triangleq\left\{
\psi_0^m,\psi_0^m+\frac{L}{Q-1},\ldots,\psi_0^m+L
\right\}.
\end{align}
To enforce the minimum-spacing constraint, define the invalid set as follows:
\begin{align}\nonumber
\hat{{\mathcal Q}}_m\triangleq\{x\in{\mathcal Q}_m:|x-\psi_{m'}|<\Delta, m'\neq m\}. 
\end{align}
The updated PA position is then chosen as follows:
\begin{align}\nonumber
\psi_m^\star\triangleq\argmax\nolimits_{\psi_m\in{\mathcal Q}_m\setminus\hat{{\mathcal Q}}_m}{\mathsf R}^{(m)}(\psi_m).
\end{align}
Algorithm \ref{Algorithm1} updates the PAs sequentially until the objective no longer improves. Since each update selects the best feasible grid point and the finite-grid objective is nondecreasing, the algorithm terminates at a coordinate-wise optimum on the adopted grid. Its complexity is ${\mathcal O}(I_{\rm iter}MQI_{\rm GL})$ when each rate evaluation uses $I_{\rm GL}$ quadrature points. We approximate the frequency integral by Gauss-Legendre quadrature:
\begin{align}
{\mathsf R}({\bm\psi})
\approx
\frac{1}{2}\sum_{i=1}^{I_{\rm GL}}\varpi_i
\log_2\left(
1+\gamma_0|G(B\xi_i/2;{\bm\psi})|^2
\right),\nonumber
\end{align}
where $\{\varpi_i\}$ and $\{\xi_i\}$ denote the standard Gauss-Legendre weights and abscissas, respectively. The number of quadrature points $I_{\rm GL}$ controls the accuracy-complexity tradeoff.

\begin{algorithm}[!t]
\caption{Element-wise algorithm for solving \eqref{problem_single_carrier_capacity}}
\label{Algorithm1}
\begin{algorithmic}[1]
\STATE Initialize ${\bm\psi}$
\REPEAT
\FOR{$m\in\{1,\ldots,M\}$}
\STATE Update $\psi_m$ via one-dimensional search
\ENDFOR
\UNTIL{convergence}
\end{algorithmic}
\end{algorithm}
\subsubsection{Comparison With the FF Case}
We next quantify the error caused by applying the conventional FF approximation to the FS PASS channel. 

Define $a_m\triangleq\alpha_m(\psi_m)e^{-{\rm j}2\pi f_{\rm c}\tau_m(\psi_m)}$. The exact frequency response is $G(f;{\bm\psi})=\sum_{m=1}^{M}a_me^{-{\rm j}2\pi f\tau_m(\psi_m)}$. We temporarily omit the dependence on ${\bm\psi}$ and write $G(f)=\sum_{m=1}^{M}a_me^{-{\rm j}2\pi f\tau_m}$. The exact FS achievable rate is ${\mathsf R}_{\rm FS}\triangleq\frac{1}{B}\int_{-B/2}^{B/2}\log_2(1+\gamma_0|G(f)|^2){\rm d}f$. Under the FF approximation, $G(f)\approx e^{-{\rm j}2\pi f\tau_{\min}}h_{\rm eff}$, where $h_{\rm eff}=\sum_{m=1}^{M}a_m$, and the corresponding FF rate is $\mathsf R_{\rm FF}\triangleq\log_2(1+\gamma_0|h_{\rm eff}|^2)$.
Define the rate overestimation as follows:
\begin{align}\nonumber
\Delta\mathsf R
\triangleq
\mathsf R_{\rm FF}-\mathsf R_{\rm FS}=
\frac{1}{B}
\int_{-\frac{B}{2}}^{\frac{B}{2}}
\log_2\left(
\frac{
1+\gamma_0|h_{\rm eff}|^2
}{
1+\gamma_0|G(f)|^2
}
\right){\rm d}f.
\end{align}

Suppose that the PA positions align all path components at the carrier frequency, i.e., $a_m=b_me^{{\rm j}\phi_0}$ with $b_m>0$. This placement structure is common under the FF PASS model \cite{xu2024rate,ouyang2025uplink,ouyang2025array}. Define $A\triangleq\sum_{m=1}^{M}b_m$ and $w_m\triangleq\frac{b_m}{A}$, so $\sum_{m=1}^{M}w_m=1$. Then, $h_{\rm eff}=Ae^{{\rm j}\phi_0}$ and $|h_{\rm eff}|^2=A^2$. By defining $\rho\triangleq\gamma_0A^2$, we obtain $\mathsf R_{\rm FF}=\log_2(1+\rho)$. Define the amplitude-weighted mean delay and centered delays as $\overline{\tau}\triangleq\sum_{m=1}^{M}w_m\tau_m$ and $\delta_m\triangleq\tau_m-\overline{\tau}$, respectively. By construction, $\sum_{m=1}^{M}w_m\delta_m=0$. The exact frequency response is $G(f)=Ae^{{\rm j}\phi_0}e^{-{\rm j}2\pi f\overline{\tau}}\Phi_\tau(2\pi f)$, where $\Phi_\tau(\omega)\triangleq\sum_{m=1}^{M}w_me^{-{\rm j}\omega\delta_m}$. Consequently, ${\mathsf{R}}_{\rm FS}=\frac{1}{B}\int_{-B/2}^{B/2}\log_2(1+\rho|\Phi_{\tau}(2\pi f)|^2){\rm d}f$.

Since $|\Phi_\tau(2\pi f)|=\lvert\sum_{m=1}^{M}w_me^{-{\rm j}2\pi f\delta_{m}}\rvert\leq\sum_{m=1}^{M}w_m=1$, we have $|G(f)|^2\leq A^2$ and therefore $\mathsf R_{\rm FS}\leq\mathsf R_{\rm FF}$. Equality over a nonzero frequency interval holds only when all paths have the same propagation delay.
\vspace{-10pt}
\begin{remark}
Under carrier-phase-aligned PA placement, the FF rate is an upper bound on the exact FS rate. The bound is strict whenever the active PA-assisted paths have unequal delays.
\end{remark}
\vspace{-10pt}
The above result establishes that the FF channel model overestimates the exact FS rate, but it does not quantify how the FF rate overestimation evolves with the aperture-induced delay spread. An exact closed-form characterization of $\Delta\mathsf R$ for arbitrary delay profiles and delay spreads is generally unavailable. We therefore develop two complementary characterizations. First, we consider small but nonzero delay spreads to quantify how rapidly the FF approximation loses accuracy as the channel departs from the FF condition. Second, we derive a lower bound on $\Delta\mathsf R$ to characterize how the FF rate overestimation grows with the waveguide aperture, or equivalently with the aperture-induced delay spread.
\subsubsection{Small-Delay-Spread Characterization} 
As discussed above, increasing the waveguide aperture can enlarge the differential propagation delays among the PA-assisted paths and eventually violate the FF condition $B\Delta\tau\ll1$, where $\Delta\tau\triangleq\tau_{\max}-\tau_{\min}$ represents the aperture-induced delay spread. To quantify how rapidly the FF approximation begins to lose accuracy as $\Delta\tau$ increases from zero, we first consider small but nonzero delay spreads for which $B\Delta\tau\ll1$ still holds. Define the amplitude-weighted central delay moments as $\mu_q\triangleq\sum_{m=1}^{M}w_m\delta_m^q$ for $q\geq1$. In particular, $\mu_1=0$, while $\mu_2$ represents the amplitude-weighted delay variance.
\vspace{-10pt}
\begin{theorem}\label{them:rate_difference}
Under carrier-phase-aligned placement and for a fixed bandwidth $B$, as $\Delta\tau\rightarrow0$ (equivalently, $B\Delta\tau\rightarrow0$), the FF rate overestimation satisfies
\begin{equation}\label{rate_gap_second_order}
\Delta{\mathsf{R}}=
\frac{B^2\pi^2}{3\ln 2}\frac{\rho}{1+\rho}\mu_2
+{\mathcal{O}}((B\Delta\tau)^4).
\end{equation}
\end{theorem}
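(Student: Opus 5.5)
Write $\omega=2\pi f$, so that $\Delta\mathsf R=\frac{1}{B}\int_{-B/2}^{B/2}\log_2\frac{1+\rho}{1+\rho|\Phi_\tau(2\pi f)|^2}{\rm d}f$. The plan is to Taylor-expand $|\Phi_\tau(\omega)|^2$ in $\omega\delta_m$, substitute the expansion into the logarithm, and integrate term by term over the band.

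\textbf{Step 1: expand $|\Phi_\tau|^2$.} Write
\begin{align}\nonumber
|\Phi_\tau(\omega)|^2=\sum_{m,m'}w_mw_{m'}\cos(\omega(\delta_m-\delta_{m'})).
\end{align}
Since $|\delta_m-\delta_{m'}|\le\Delta\tau$ and $|\omega|\le\pi B$, the arguments satisfy $|\omega(\delta_m-\delta_{m'})|\le\pi B\Delta\tau\to0$. Expanding $\cos x=1-\frac{x^2}{2}+\mathcal O(x^4)$ and using $\sum_m w_m=1$ together with $\sum_m w_m\delta_m=0$ gives
\begin{align}\nonumber
\sum_{m,m'}w_mw_{m'}(\delta_m-\delta_{m'})^2=2\mu_2 .
\end{align}
Hence $|\Phi_\tau(\omega)|^2=1-\omega^2\mu_2+\epsilon(\omega)$, where $|\epsilon(\omega)|\le C(\pi B\Delta\tau)^4$ uniformly on the band, with an absolute constant $C$ that uses only $\sum_m w_m=1$. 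Also note $\mu_2\le\Delta\tau^2$, so $\omega^2\mu_2=\mathcal O((B\Delta\tau)^2)$.

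\textbf{Step 2: expand the log ratio.} Set $u(\omega)\triangleq1-|\Phi_\tau(\omega)|^2=\omega^2\mu_2-\epsilon(\omega)$, which lies in $[0,\mathcal O((B\Delta\tau)^2)]$. Then
\begin{align}\nonumber
\log_2\frac{1+\rho}{1+\rho-\rho u}=-\log_2\!\Big(1-\frac{\rho}{1+\rho}u\Big)=\frac{1}{\ln2}\frac{\rho}{1+\rho}u+\mathcal O(u^2).
\end{align}
Because $\frac{\rho}{1+\rho}<1$ and $u$ is small, the remainder is bounded by $c\,u^2$ with $c$ independent of $\rho$. It is therefore $\mathcal O((B\Delta\tau)^4)$ uniformly in $f$, and the $\epsilon$ contribution is of the same order. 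This gives
\begin{align}\nonumber
\log_2\frac{1+\rho}{1+\rho|\Phi_\tau|^2}=\frac{1}{\ln2}\frac{\rho}{1+\rho}\omega^2\mu_2+\mathcal O((B\Delta\tau)^4).
\end{align}

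\textbf{Step 3: integrate.} Using
\begin{align}\nonumber
\frac1B\int_{-B/2}^{B/2}(2\pi f)^2{\rm d}f=\frac{4\pi^2}{B}\cdot\frac{2}{3}\Big(\frac B2\Big)^3=\frac{\pi^2B^2}{3},
\end{align}
the leading term becomes $\frac{B^2\pi^2}{3\ln2}\frac{\rho}{1+\rho}\mu_2$. The uniform $\mathcal O((B\Delta\tau)^4)$ remainder, averaged over a band of normalized length one, stays $\mathcal O((B\Delta\tau)^4)$, which yields \eqref{rate_gap_second_order}.

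\textbf{Main obstacle.} The delicate part is checking that the error term is genuinely fourth order with constants that are uniform. There are two points to verify. First, no third-order term survives: the cosine expansion contains only even powers, so the $\mu_3$ contributions cancel exactly, which is visible from the pairwise form of $|\Phi_\tau|^2$. Second, the remainder bounds must not depend on $M$ or on $\rho$. This follows from $\sum_m w_m=1$ for the cosine remainder and from $\frac{\rho}{1+\rho}<1$ for the log remainder, so the constants hold as $\Delta\tau\to0$ for fixed $B$.
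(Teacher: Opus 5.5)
Your proposal is correct and follows essentially the same route as the paper's proof. Both expand the in-band channel power for small $\omega\Delta\tau$, linearize the logarithm around $\log_2(1+\rho)$, and integrate term by term using $\frac{1}{B}\int_{-B/2}^{B/2}(2\pi f)^2{\rm d}f=\frac{\pi^2B^2}{3}$.

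The differences are minor but worthwhile. You expand $|\Phi_\tau|^2$ through its pairwise-cosine form with an explicit uniform remainder. This makes the cancellation of odd orders and the independence of the constants from $\rho$ and $M$ transparent. The paper instead expands $\Phi_\tau$ in the moments $\mu_2,\mu_3,\mu_4$. It carries the computation to fourth order to exhibit the $(B\Delta\tau)^4$ coefficient and then absorbs that term into the remainder.
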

\vspace{-10pt}
\begin{IEEEproof}
See Appendix \ref{proof_them:rate_difference}.
\end{IEEEproof}
To make the dependence on $\Delta\tau$ explicit, define $\nu_2\triangleq\frac{\mu_2}{\Delta\tau^2}$ for $\Delta\tau>0$. Since the weighted delay distribution is supported on $[\tau_{\min},\tau_{\max}]$, Popoviciu's inequality gives $\mu_2\leq\frac{\Delta\tau^2}{4}$ and hence $0<\nu_2\leq\frac{1}{4}$. For a fixed bandwidth $B$, \eqref{rate_gap_second_order} can therefore be expressed as follows:
\begin{equation}
\Delta\mathsf R
=
\frac{B^2\pi^2}{3\ln 2}
\frac{\rho}{1+\rho}
\nu_2(\Delta\tau)^2
+
\mathcal O((\Delta\tau)^4).
\label{rate_gap_quadratic_onset}
\end{equation}
The coefficient in \eqref{rate_gap_quadratic_onset} depends on the received SNR via $\frac{\rho}{1+\rho}$ and on the normalized weighted delay profile via $\nu_2$. 
\vspace{-10pt}
\begin{remark}
For a fixed bandwidth $B$ and sufficiently small delay spread $\Delta\tau$, \eqref{rate_gap_quadratic_onset} shows that the FF rate overestimation exhibits an approximately quadratic dependence on $\Delta\tau$. This result quantifies how rapidly the FF approximation deteriorates as the waveguide aperture increases and the resulting aperture-induced delay spread grows within the small-delay-spread regime.
\end{remark}
\vspace{-10pt}
We next relate the FF rate overestimation to the PASS geometry. Referring to \eqref{delay_model_basic}, the delay of the path associated with the $m$th segment is $\tau_m=\frac{r_m+n_{\rm eff}\ell_m}{c}$, where $r_m\triangleq\|{\mathbf u}-{\bm\psi}_m\|$ and $\ell_m\triangleq\|{\bm\psi}_m-{\bm\psi}_0^m\|$ denote the free-space and in-waveguide propagation distances, respectively. Define their amplitude-weighted means as $\overline r\triangleq\sum_{m=1}^{M}w_mr_m$ and $\overline\ell\triangleq\sum_{m=1}^{M}w_m\ell_m$. The centered delay can then be written as $\delta_m=\frac{(r_m-\overline r)+n_{\rm eff}(\ell_m-\overline\ell)}{c}$. Accordingly, the amplitude-weighted delay variance is $\mu_2=\frac{1}{c^2}(
\sigma_r^2
+
n_{\rm eff}^2\sigma_\ell^2
+
2n_{\rm eff}\sigma_{r,\ell}
)$, which yields
\begin{align}
\Delta\mathsf R
\approx
\frac{\pi^2B^2\rho}
{3c^2(1+\rho)\ln 2}
\left(
\sigma_r^2
+
n_{\rm eff}^2\sigma_\ell^2
+
2n_{\rm eff}\sigma_{r,\ell}
\right).\nonumber
\end{align}
where $\sigma_r^2\triangleq\sum_{m=1}^{M}w_m(r_m-\overline r)^2$, $\sigma_\ell^2\triangleq\sum_{m=1}^{M}w_m(\ell_m-\overline\ell)^2$, and $\sigma_{r,\ell}\triangleq\sum_{m=1}^{M}w_m(r_m-\overline r)(\ell_m-\overline\ell)$. 

Here, $\sigma_r^2$ characterizes the variation of the free-space propagation distances across the PA-assisted paths, $n_{\rm eff}^2\sigma_\ell^2$ characterizes the corresponding variation of the guided propagation distances, and $2n_{\rm eff}\sigma_{r,\ell}$ captures their interaction. Hence, the PASS geometry determines not only carrier-phase alignment and received power but also the delay dispersion that governs the local accuracy of the FF approximation.
\subsubsection{Large-Delay-Spread Characterization}
As the waveguide aperture continues to increase, the aperture-induced delay spread $\Delta\tau$ can become large enough that the FF condition $B\Delta\tau\ll1$ no longer holds. The preceding small-delay-spread characterization then ceases to apply. To characterize the FF rate overestimation beyond this regime, we derive a lower bound on $\Delta\mathsf R$ that is valid for arbitrary $B\Delta\tau$ and use it to investigate the large-delay-spread behavior. Since $\Delta\tau$ increases with the waveguide aperture in the considered PASS geometry, the resulting analysis also characterizes the large-aperture behavior.

Define the normalized band-averaged channel power as follows:
\begin{align}
\mathcal Q_B
\triangleq\frac{1}{B}\int_{-\frac{B}{2}}^{\frac{B}{2}}
|\Phi_\tau(2\pi f)|^2{\rm d}f
=\sum_{m=1}^{M}w_m^2+\mathcal C_B,
\nonumber
\end{align}
where $\mathcal C_B\triangleq\sum_{m\neq m'}w_mw_{m'}\frac{\sin\left(\pi B(\tau_m-\tau_{m'})\right)}{\pi B(\tau_m-\tau_{m'})}$. Since $|\Phi_\tau(2\pi f)|\leq1$, we have $0\leq\mathcal Q_B\leq1$. Jensen's inequality yields $\mathsf R_{\rm FS}\leq\log_2(1+\rho\mathcal Q_B)$, and it follows that
\begin{equation}
\Delta\mathsf R\geq\log_2\left(\frac{1+\rho}{1+\rho\mathcal Q_B}\right).
\label{rate_gap_Jensen_bound}
\end{equation}

Assume that the active paths have pairwise-distinct delays, i.e., $\tau_m\ne\tau_{m'}$ for $m\ne m'$ \cite{ouyang2025array,ouyang2025uplink}. Carrier-phase alignment gives $e^{-{\rm j}2\pi f_{\rm c}\tau_m}=e^{{\rm j}\phi_0}$ for every $m$. Hence, the values $q_m\triangleq f_{\rm c}(\tau_m-\tau_1)$ are pairwise-distinct integers, and $\tau_m-\tau_{m'}=\frac{q_m-q_{m'}}{f_{\rm c}}$. Let $z_m^{(\pm)}\triangleq w_m e^{\pm{\rm j}\pi B\tau_m}$. The identity $\sin(\pi B(\tau_m-\tau_{m'}))=\frac{e^{{\rm j}\pi B(\tau_m-\tau_{m'})}-e^{-{\rm j}\pi B(\tau_m-\tau_{m'})}}{2{\rm j}}$ gives $\mathcal C_B=\frac{\mathcal H_+-\mathcal H_-}{2\pi{\rm j}B}$, where $\mathcal H_\pm\triangleq f_{\rm c}\sum_{m\neq m'}\frac{z_m^{(\pm)}(z_{m'}^{(\pm)})^*}{q_m-q_{m'}}$. Because the $q_m$ are distinct integers, the classical Hilbert inequality gives $|\mathcal H_\pm|\leq\pi f_{\rm c}\sum_mw_m^2$ \cite{steele2004cauchy}. Therefore,
\begin{equation}
|\mathcal C_B|\leq\frac{|\mathcal H_+|+|\mathcal H_-|}{2\pi B}\leq\frac{f_{\rm c}}{B}\sum_{m=1}^{M}w_m^2.
\label{cross_term_bound}
\end{equation}

\begin{figure}[!t]
\centering
    \subfigure[Centered user.]
    {
        \includegraphics[width=0.45\textwidth]{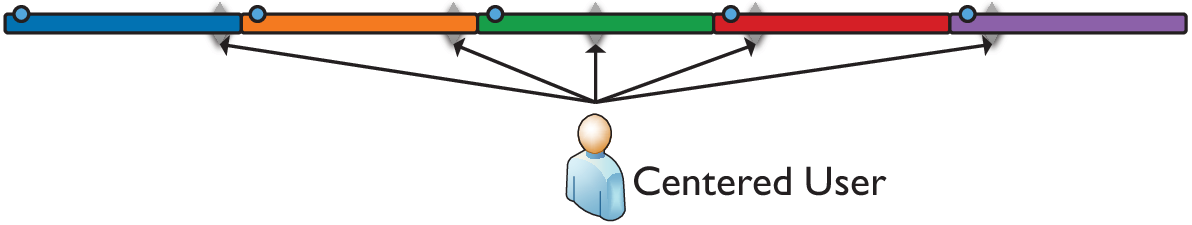}
        \label{Figure: User_Center}
    }
    \subfigure[Edge user.]
    {
        \includegraphics[width=0.45\textwidth]{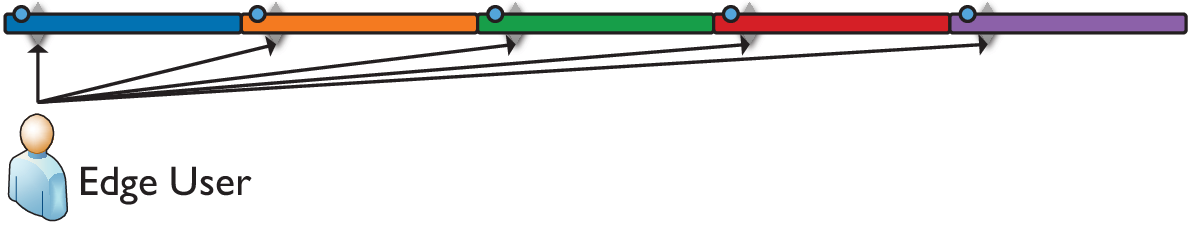}
        \label{Figure: User_Edge}
    }
\caption{Illustration of user locations.}
\label{Figure: User_Deployment}
\vspace{-15pt}
\end{figure}

Define the effective number of contributing paths as follows:
\begin{equation}
M_{\rm eff}\triangleq\frac{1}{\sum_{m=1}^{M}w_m^2}
=\frac{(\sum_{m=1}^{M}b_m)^2}{\sum_{m=1}^{M}b_m^2}.
\nonumber
\end{equation}
We have $1\leq M_{\rm eff}\leq M$, with $M_{\rm eff}=M$ for equal path amplitudes. Equation \eqref{cross_term_bound}, together with $\mathcal Q_B\leq1$, yields
\begin{equation}
\mathcal Q_B\leq\min\left\{1,\frac{1+f_{\rm c}/B}{M_{\rm eff}}\right\}.\nonumber
\end{equation}
Substitution into \eqref{rate_gap_Jensen_bound} gives the finite-bandwidth, finite-aperture lower bound as follows:
\begin{equation}
\Delta\mathsf R\geq\log_2\left[
\frac{1+\rho}{1+\rho\min\{1,(1+f_{\rm c}/B)/M_{\rm eff}\}}
\right].
\nonumber
\end{equation}
The bound reduces to $\Delta\mathsf R\geq0$ when $M_{\rm eff}\leq1+f_{\rm c}/B$. When $M_{\rm eff}>1+f_{\rm c}/B$, it becomes
\begin{equation}
\Delta\mathsf R\geq\log_2\left(\frac{1+\rho}{1+(1+f_{\rm c}/B)\rho/M_{\rm eff}}\right).
\label{rate_gap_effective_paths}
\end{equation}
We next evaluate $M_{\rm eff}$ for the two representative user locations in {\figurename} {\ref{Figure: User_Deployment}}. The user lies directly below the waveguide axis, so its perpendicular distance to the waveguide is $d$.
\begin{itemize}
  \item \emph{Centered User:} Let $M=2J+1$ and index the central segment as $m_0=J+1$. Following the carrier-phase-aligned placement in \cite{ouyang2025array,ouyang2025uplink}, the PA in each segment is first placed near the endpoint closest to the user and then adjusted over a wavelength-scale distance to align its carrier phase. Specifically, a right-side PA is shifted rightward from its feed point, whereas a left-side PA is shifted leftward from the endpoint closest to the user toward its feed point. These wavelength-scale adjustments have a negligible effect on the free-space path amplitudes. By also neglecting the in-waveguide attenuation over each short segment \cite{ouyang2025uplink}, the path amplitudes can be approximated as $b_{m_0}\approx\frac{\sqrt{\eta}}{d}$ and $b_{m_0\pm q}\approx\frac{\sqrt{\eta}}{({d^2+(q-\frac{1}{2})^2L^2})^{1/2}}$ for $q=1,\ldots,J$. It follows that
\begin{equation}
M_{{\rm eff},{\rm c}}
\approx
\frac{
\left[
\frac{1}{d}
+
2\sum_{q=1}^{J}
\frac{1}
{({d^2+(q-\frac{1}{2})^2L^2})^{1/2}}
\right]^2
}{
\frac{1}{d^2}
+
2\sum_{q=1}^{J}
\frac{1}
{d^2+(q-\frac{1}{2})^2L^2}
}.
\label{center_user_effective_paths}
\end{equation}
  \item \emph{Edge User:} For a user at the left edge, all segments lie to its right. The minimum-loss reference position in segment $m$ is its left feed point, followed by a wavelength-scale phase adjustment. The path amplitudes are approximated by $b_m\approx\frac{\sqrt{\eta}}{({d^2+(m-1)^2L^2})^{1/2}}$ for $m=1,\ldots,M$. Therefore,
\begin{equation}
M_{{\rm eff},{\rm e}}
\approx
\frac{
\left[
\sum_{m=1}^{M}
\frac{1}
{({d^2+(m-1)^2L^2})^{1/2}}
\right]^2
}{
\sum_{m=1}^{M}
\frac{1}
{d^2+(m-1)^2L^2}
}.
\label{edge_user_effective_paths}
\end{equation}
\end{itemize}

Substituting \eqref{center_user_effective_paths} and \eqref{edge_user_effective_paths} into \eqref{rate_gap_effective_paths} gives, for $i\in\{{\rm c},{\rm e}\}$,
\begin{equation}
\Delta\mathsf R_i
\geq
\log_2
\left(
\frac{1+\rho_i}
{1+(1+f_{\rm c}/B)\rho_i/M_{{\rm eff},i}}
\right)\triangleq
\tilde{\Delta}\mathsf R_i,\label{rate_gap_effective_paths_case}
\end{equation}
where $\rho_i=\gamma_0(\sum_{m=1}^{M}b_{m,i})^2$. We next use the sum-to-integral approximations in \cite{ouyang2025array,ouyang2025uplink} to obtain tractable finite-aperture expressions. For the centered and edge users, respectively,
\begin{subequations}\label{effective_paths_integral_approx}
\begin{align}
M_{{\rm eff},{\rm c}}&\approx
\frac{\left[d^{-1}+\frac{2}{L}\operatorname{asinh}\!\left(\frac{D_x-L}{2d}\right)\right]^2}
{d^{-2}+\frac{2}{dL}\arctan\!\left(\frac{D_x-L}{2d}\right)},\\
M_{{\rm eff},{\rm e}}&\approx
\frac{\left[d^{-1}+\frac{1}{L}\operatorname{asinh}\!\left(\frac{D_x-L}{d}\right)\right]^2}
{d^{-2}+\frac{1}{dL}\arctan\!\left(\frac{D_x-L}{d}\right)},
\end{align}
\end{subequations}
and the corresponding center-frequency SNRs are
\begin{subequations}\label{effective_snr_integral_approx}
\begin{align}
\rho_{
\rm c}&\approx\gamma_0\eta
\left[d^{-1}+\frac{2}{L}\operatorname{asinh}\!\left(\frac{D_x-L}{2d}\right)\right]^2,\\
\rho_{
\rm e}&\approx\gamma_0\eta
\left[d^{-1}+\frac{1}{L}\operatorname{asinh}\!\left(\frac{D_x-L}{d}\right)\right]^2.
\end{align}
\end{subequations}
These expressions directly approximate the lower bound in \eqref{rate_gap_effective_paths_case}. They also expose its large-aperture behavior. For fixed $L$ and $d$, the amplitude sum grows logarithmically with $D_x$, whereas the squared-amplitude sum approaches a finite positive value. Hence, as ${D_x}/{d}\rightarrow\infty$,
\begin{align}
M_{{\rm eff},{\rm c}}
\sim
\frac{4d}{\frac{L^2}{d}+\pi L}
\ln^2
\left(
\frac{D_x}{d}
\right),
M_{{\rm eff},{\rm e}}
\sim
\frac{2d}{\pi L}
\ln^2
\left(
\frac{2D_x}{d}
\right).
\nonumber
\end{align}
The same amplitude-sum approximation gives
\begin{align}
\rho_{\rm c}
\sim
\frac{4\gamma_0\eta}{L^2}
\ln^2
\left(
\frac{D_x}{d}
\right),\quad
\rho_{\rm e}
\sim
\frac{\gamma_0\eta}{L^2}
\ln^2
\left(
\frac{2D_x}{d}
\right).
\label{edge_rate_gap_lower_scaling1}
\end{align}
Consequently,
\begin{align}
\frac{\rho_{\rm c}}{M_{{\rm eff},{\rm c}}}
\rightarrow
\frac{(\pi+L/d)\gamma_0\eta}{dL},\quad
\frac{\rho_{\rm e}}{M_{{\rm eff},{\rm e}}}
\rightarrow
\frac{\pi\gamma_0\eta}{2dL}.
\label{rate_gap_upper_scaling2}
\end{align}

For every fixed $B>0$, the factor $1+f_{\rm c}/B$ and the limits in
\eqref{rate_gap_upper_scaling2} are finite. Moreover,
$M_{{\rm eff},i}\rightarrow\infty$, so
$M_{{\rm eff},i}>1+f_{\rm c}/B$ for all sufficiently large $D_x$.
Combining \eqref{rate_gap_effective_paths_case},
\eqref{edge_rate_gap_lower_scaling1}, and
\eqref{rate_gap_upper_scaling2} yields
\begin{equation}
\Delta\mathsf R_i\geq \tilde{\Delta}\mathsf R_i
\sim
2\log_2\ln
\left(
\frac{D_x}{d}
\right),
\quad
i\in\{{\rm c},{\rm e}\},
\label{rate_gap_aperture_scaling}
\end{equation}
as $D_x/d\rightarrow\infty$, for every fixed $B>0$. To express this scaling directly in terms of the aperture-induced delay spread, let $\Delta\tau_i$ denote the delay spread for user location $i\in\{{\rm c},{\rm e}\}$. For the centered and edge users, $\Delta\tau_{\rm c}\sim\frac{D_x}{2c}$ and $\Delta\tau_{\rm e}\sim\frac{D_x}{c}$, respectively, as $D_x/d\rightarrow\infty$. Hence, \eqref{rate_gap_aperture_scaling} can equivalently be written as follows:
\begin{equation}
\Delta\mathsf R_i\geq \tilde{\Delta}\mathsf R_i
\sim
2\log_2\ln
\left(
\frac{c\Delta\tau_i}{d}
\right),
\quad
i\in\{{\rm c},{\rm e}\},
\label{rate_gap_aperture_scaling_tau}
\end{equation}
as $c\Delta\tau_i/d\rightarrow\infty$, for every fixed $B>0$.
\vspace{-10pt}
\begin{remark}
As the waveguide aperture and the resulting aperture-induced delay spread become large, \eqref{rate_gap_aperture_scaling_tau} shows that the derived lower bound grows double logarithmically with $\Delta\tau_i$. Consequently, the exact FF rate overestimation grows at least double logarithmically with the aperture-induced delay spread.
\end{remark}
\vspace{-10pt}
\subsection{Multiuser Case}
We now extend the analysis to the multiuser uplink in \eqref{finite_block_vector_model}. We consider joint and parallel decoding and derive their achievable sum-rates.
\subsubsection{Joint Decoding}
Under joint decoding, the receiver jointly decodes all $K$ messages. The asymptotic sum-rate is the MI between all transmitted sequences and the received sequence \cite{verdu1989multiple}:
\begin{align}
{\mathsf{R}}_{\rm J}({\bm\psi})&=\lim_{N\rightarrow\infty}\frac{1}{N}I({\mathbf{s}}_{1,N},\ldots,{\mathbf{s}}_{K,N};{\mathbf{r}}_N)\nonumber\\
&=\lim_{N\rightarrow\infty}\frac{1}{N}\log_2\det\left({\mathbf{I}}_N+\sum_{k=1}^{K}{{\mathbf{A}}_{k,N}({\bm\psi})}
\right),\nonumber
\end{align}
where ${\mathbf{A}}_{k,N}({\bm\psi})\triangleq\frac{P_k}{N_0}{\mathbf{G}}_{k,N}({\bm\psi}){\mathbf{G}}_{k,N}^{\mathsf{H}}({\bm\psi})$. The Toeplitz-symbol argument used for Theorem \ref{thm:single_user_asymptotic_rate} gives
\begin{equation}\label{joint_infinite_blocklength_rate_limit}
{\mathsf R}_{\rm J}({\bm\psi})
=\frac{1}{B}\int_{-\frac{B}{2}}^{\frac{B}{2}}
\log_2\left(
1+\sum_{k=1}^{K}A_{k}(f;{\bm{\psi}})
\right){\rm d}f,
\end{equation}
where $A_{k}(f;{\bm{\psi}})\triangleq \frac{P_k}{N_0}|G_k(f;{\bm\psi})|^2$. A detailed derivation of \eqref{joint_infinite_blocklength_rate_limit} is provided in Appendix \ref{proof_asym_joint_decoding}. Joint decoding generally incurs complexity that grows exponentially with the number of users. Successive interference cancellation (SIC) provides an alternative implementation with much lower complexity and achieves the same sum-rate as joint decoding \cite{heath2018foundations}.
\subsubsection{Parallel Decoding}
Under parallel decoding, user $k$ is decoded while the other users are treated as colored Gaussian interference \cite{hassibi2003much}. Its asymptotic achievable rate is
\begin{align}
{\mathsf R}_{k,{\rm P}}({\bm\psi})
&=
\lim_{N\rightarrow\infty}\frac{1}{N}I({\mathbf s}_{k,N};{\mathbf r}_N)\nonumber\\
&=
\lim_{N\rightarrow\infty}\frac{1}{N}
\log_2\det\left(
{\mathbf I}_N+{\mathbf{A}}_{k,N}({\bm\psi}){\mathbf Z}_{k,N}^{-1}
\right),\nonumber
\end{align}
where ${\mathbf Z}_{k,N}\triangleq {\mathbf I}_N+\sum_{k'\neq k}{\mathbf{A}}_{k',N}({\bm\psi})$ is the interference-plus-noise covariance matrix. The Toeplitz argument used for \eqref{joint_infinite_blocklength_rate_limit} yields the asymptotic sum-rate ${\mathsf R}_{\rm P}({\bm\psi})\triangleq\sum_{k=1}^{K}{\mathsf R}_{k,{\rm P}}({\bm\psi})$:
\begin{align}\label{eq:parallel_rate_discrete}
{\mathsf R}_{\rm P}({\bm\psi})
=
\sum_{k=1}^{K}\frac{1}{B}\int_{-\frac{B}{2}}^{\frac{B}{2}}
\log_2\left(
1+
\frac{A_{k}(f;{\bm{\psi}})}
{1+\hat{A}_{k}(f;{\bm{\psi}})}
\right){\rm d}f,
\end{align}
where $\hat{A}_{k}(f;{\bm{\psi}})\triangleq\sum_{k'\neq k}A_{k'}(f;{\bm{\psi}})$. A detailed derivation of \eqref{eq:parallel_rate_discrete} is presented in Appendix \ref{proof_asym_parallel_decoding}. 
\subsubsection{PA Placement Optimization}
Both ${\mathsf R}_{\rm J}({\bm\psi})$ and ${\mathsf R}_{\rm P}({\bm\psi})$ depend explicitly on the common PA position vector. The multiuser PA placement problem follows from \eqref{problem_single_carrier_capacity} after its objective is replaced by either sum-rate. The same element-wise framework applies.

\section{Finite-Block CP-SC-FDE}\label{Section: Finite-Block CP-SC-FDE}
The preceding section characterizes the asymptotic information-theoretic benchmark as the observation and coding duration become arbitrarily large, i.e., $N\rightarrow\infty$. Since CP-SC-FDE provides a practical framework for handling frequency selectivity through frequency-domain equalization \cite{falconer2002frequency,pancaldi2008single}, we now adopt it for transmission over the FS PASS channel with a finite processing-block size. Each processing block contains $N$ data symbols and is preceded by a CP of finite length.\footnote{Here, $N$ denotes the number of data symbols in each CP-SC-FDE processing block and should be distinguished from the total codeword length. A channel code may span multiple SC-FDE blocks. Specifically, if a codeword spans $J$ blocks, the total number of coded data symbols is $n_{\rm c}=JN$. The analysis in this section keeps $N$ finite to account for the CP overhead, the finite DFT grid, and the resulting SC-FDE equalization effects, while the achievable-rate interpretation allows $J\rightarrow\infty$ and hence $n_{\rm c}\rightarrow\infty$. Therefore, the derived rates characterize a fixed-$N$ CP-SC-FDE architecture rather than the finite-codeword reliability regime. If $n_{\rm c}$ is also finite and a prescribed decoding error probability is imposed, an additional finite-codeword penalty associated with channel dispersion should be taken into account \cite{polyanskiy2010channel}.} A key contribution of this section is to characterize the achievable rate of PASS under this practical transmission framework, reveal its dependence on PA placement, and develop an FS-aware placement method to improve the achievable rate.

For a strictly bandlimited waveform, the matched pulse $c_p(t)$ generally has infinite temporal support, so the sampled channel can contain infinitely many taps. Practical block transmission truncates the weak pulse tails to obtain a finite-memory approximation. Let $T_{\rm p}$ denote the retained half-duration of $c_p(t)$, chosen to satisfy $\int_{-T_{\rm p}}^{T_{\rm p}}|c_p(t)|^2{\rm d}t\geq(1-\epsilon_{\rm p})\int_{-\infty}^{\infty}|c_p(t)|^2{\rm d}t$, where $\epsilon_{\rm p}\in(0,1)$ is the truncation tolerance. We use this approximation to derive the spectral efficiency of CP-SC-FDE with symbolwise demapping \cite{falconer2002frequency,pancaldi2008single}.
\subsection{Single-User Case}
For notational simplicity, we drop the user index. According to \eqref{eq:mf_output_expanded}, the $m$th PA-assisted path contributes $\alpha_m(\psi_m)e^{-{\rm j}2\pi f_{\rm c}\tau_m(\psi_m)}c_p(t-\tau_m(\psi_m))$. After truncating $c_p(t)$ to $[-T_{\rm p},T_{\rm p}]$, the retained response of this path is supported over $[\tau_m(\psi_m)-T_{\rm p},\tau_m(\psi_m)+T_{\rm p}]$. The first and last retained sample indices are given by
\begin{align}
\ell_{\min}({\bm\psi})
&=
\min_{m\in[M]}
\left\lceil
\frac{\tau_m(\psi_m)-T_{\rm p}}{T_{\rm s}}
\right\rceil
=
\left\lceil
\frac{\tau_{\min}({\bm\psi})-T_{\rm p}}{T_{\rm s}}
\right\rceil,
\nonumber\\
\ell_{\max}({\bm\psi})
&=
\max_{m\in[M]}
\left\lfloor
\frac{\tau_m(\psi_m)+T_{\rm p}}{T_{\rm s}}
\right\rfloor
=
\left\lfloor
\frac{\tau_{\max}({\bm\psi})+T_{\rm p}}{T_{\rm s}}
\right\rfloor.
\nonumber
\end{align}
respectively.
The retained discrete-time channel is
$\{\hat g[\ell;{\bm\psi}]\}_{\ell=\ell_{\min}({\bm\psi})}^{\ell_{\max}({\bm\psi})}$,
and its memory is
$L_{\rm m}({\bm\psi})\triangleq
\ell_{\max}({\bm\psi})-\ell_{\min}({\bm\psi})$. Up to the integer-rounding effects introduced by symbol-rate sampling, it can be approximated as follows:
\begin{align}
L_{\rm m}({\bm\psi})
&\approx
\left\lceil
\frac{
(\tau_{\max}({\bm\psi})+T_{\rm p})
-
(\tau_{\min}({\bm\psi})-T_{\rm p})
}{
T_{\rm s}
}
\right\rceil
\nonumber\\
&=
\left\lceil
\frac{\Delta\tau({\bm\psi})+2T_{\rm p}}{T_{\rm s}}
\right\rceil
=
\left\lceil
B\left(\Delta\tau({\bm\psi})+2T_{\rm p}\right)
\right\rceil.
\label{channel_memory_approximation}
\end{align}
The term $B\Delta\tau({\bm\psi})$ represents the aperture-induced inter-path delay spread, whereas $2BT_{\rm p}$ represents the retained matched-pulse duration. To obtain a causal representation, define $\tilde g[q;{\bm\psi}]\triangleq\hat g[q+\ell_{\min}({\bm\psi});{\bm\psi}]$ for $q=0,\ldots,L_{\rm m}({\bm\psi})$. This re-indexing changes only the timing reference and does not alter the channel energy or its frequency-response magnitude. In the following, the dependence of $L_{\rm m}$, $\ell_{\min}$, and $\ell_{\max}$ on ${\bm\psi}$ is omitted when no ambiguity arises.

Finite-block transmission introduces block boundaries, so delayed components can cause inter-block interference (IBI). Its duration depends on both the \emph{inter-path delay spread} and the \emph{retained matched-pulse tails}, which are included in $L_{\rm m}$. We assume $N>L_{\rm m}$ and prepend a CP of length $L_{\rm cp}\geq L_{\rm m}$ to each data block. The IBI from the retained finite-memory channel then lies entirely within the CP and disappears after CP removal. The discarded pulse tails may cause residual IBI, which is negligible under the adopted truncation criterion.
\vspace{-10pt}
\begin{remark}
Equation \eqref{channel_memory_approximation} shows that a larger waveguide aperture can increase the required CP through $\Delta\tau({\bm\psi})$. A longer waveguide extends coverage by reducing the user-to-PA distances. In SWAN, it also activates more segments, which can provide additional array gain but may increase the differential delay spread and, consequently, the CP overhead.
\end{remark}
\vspace{-10pt}
Let ${\mathbf s}_N=[s[1],\ldots,s[N]]^{\mathsf T}$ denote the transmitted block. After CP removal, the retained samples satisfy $r_{\rm eff}[n]=\sum_{q=0}^{L_{\rm m}}\tilde g[q;{\bm\psi}]s[((n-1-q)\bmod N)+1]+w_{\rm eff}[n]$ for $n\in[N]$. Thus, ${\mathbf r}_{\rm eff}={\mathbf H}_{\rm eff}({\bm\psi}){\mathbf s}_N+{\mathbf w}_{\rm eff}$, where ${\mathbf H}_{\rm eff}({\bm\psi})\in\mathbb{C}^{N\times N}$
is the circulant matrix whose first column is
$[\tilde g[0;{\bm\psi}],\ldots,\tilde g[L_{\rm m};{\bm\psi}],
0,\ldots,0]^{\mathsf T}\in\mathbb{C}^{N\times1}$,
${\mathbf r}_{\rm eff}\triangleq
[r_{\rm eff}[1],\ldots,r_{\rm eff}[N]]^{\mathsf T}$, and
${\mathbf w}_{\rm eff}\triangleq
[w_{\rm eff}[1],\ldots,w_{\rm eff}[N]]^{\mathsf T}
\sim{\mathcal{CN}}({\mathbf 0},N_0{\mathbf I}_N)$ under the adopted
Nyquist pulse assumption.

Since ${\mathbf H}_{\rm eff}({\bm\psi})$ is circulant, it can be diagonalized by the unitary DFT matrix ${\mathbf F}_N$ as ${\mathbf H}_{\rm eff}({\bm\psi})={\mathbf F}_N^{\mathsf H}{\bm\Lambda}({\bm\psi}){\mathbf F}_N$, where
${\bm\Lambda}({\bm\psi})
={\mathsf{diag}}(\hat{\lambda}_N^{(1)}({\bm\psi}),\ldots,
\hat{\lambda}_N^{(N)}({\bm\psi}))$.
The $n$th eigenvalue is the $N$-point DFT of the causal channel
taps:
\begin{align}
\hat\lambda_N^{(n)}({\bm\psi})
=
\sum_{q=0}^{L_{\rm m}}
\tilde g[q;{\bm\psi}]
e^{-{\rm j}\frac{2\pi(n-1)q}{N}}=
e^{{\rm j}\frac{2\pi(n-1)\ell_{\min}}{N}}
\sum_{\ell=\ell_{\min}}^{\ell_{\max}}
\hat g[\ell;{\bm\psi}]
e^{-{\rm j}\frac{2\pi(n-1)\ell}{N}},
\quad n\in[N].
\nonumber
\end{align}
When the discarded channel tails are negligible, Nyquist-rate sampling with $T_{\rm s}=\frac{1}{B}$ gives
\begin{equation}
\hat\lambda_N^{(n)}({\bm\psi})
\approx
e^{{\rm j}\frac{2\pi(n-1)\ell_{\min}}{N}}
G(f_n;{\bm\psi}),
\label{eigenvalue_sc_fde}
\end{equation}
where
\begin{equation}
f_n
\triangleq
\begin{cases}
\frac{n-1}{NT_{\rm s}},
&
1\leq n\leq\lceil N/2\rceil,
\\
\frac{n-1-N}{NT_{\rm s}},
&
\lceil N/2\rceil<n\leq N.
\end{cases}
\label{frequency_bin_def}
\end{equation}

Applying the DFT to this block model converts the circular convolution into $N$ parallel frequency-bin channels:
\begin{equation}
\tilde{\mathbf r}_{\rm eff}={\mathbf F}_N{\mathbf r}_{\rm eff}={\bm\Lambda}({\bm\psi})\tilde{\mathbf s}_N+\tilde{\mathbf w}_{\rm eff}.
\nonumber
\end{equation}
Here, $\tilde{\mathbf s}_N={\mathbf F}_N{\mathbf s}_N$ and $\tilde{\mathbf w}_{\rm eff}={\mathbf F}_N{\mathbf w}_{\rm eff}$. Since ${\mathbf F}_N$ is unitary, $\tilde{\mathbf w}_{\rm eff}\sim{\mathcal{CN}}({\mathbf 0},N_0{\mathbf I}_N)$. The receiver applies a linear frequency-domain equalizer ${\mathbf W}\in\mathbb{C}^{N\times N}$ to $\tilde{\mathbf r}_{\rm eff}$ and transforms $\hat{\tilde{\mathbf s}}_N={\mathbf W}\tilde{\mathbf r}_{\rm eff}$ back to the time domain:
\begin{equation}
\hat{\mathbf s}_N={\mathbf F}_N^{\mathsf H}\hat{\tilde{\mathbf s}}_N={\mathbf G}_{\rm eq}{\mathbf s}_N+\hat{\mathbf w}_{\rm eff}.
\nonumber
\end{equation}
Here, ${\mathbf G}_{\rm eq}
\triangleq
{\mathbf F}_N^{\mathsf H}{\mathbf W}
{\bm\Lambda}({\bm\psi}){\mathbf F}_N$ denotes the overall time-domain response after equalization, and $\hat{\mathbf w}_{\rm eff}
\triangleq
{\mathbf F}_N^{\mathsf H}{\mathbf W}\tilde{\mathbf w}_{\rm eff}$ denotes the filtered noise. The entries of $\hat{\mathbf s}_N$ are subsequently processed by a symbolwise demapper. Since ${\mathbb E}[{\mathbf s}_N{\mathbf s}_N^{\mathsf H}]=P{\mathbf I}_N$, the unitarity of ${\mathbf F}_N$ gives ${\mathbb E}[\tilde{\mathbf s}_N\tilde{\mathbf s}_N^{\mathsf H}]=P{\mathbf I}_N$. The linear minimum mean-squared error (LMMSE) equalizer minimizes ${\mathbb E}[\|{\mathbf W}\tilde{\mathbf r}_{\rm eff}-\tilde{\mathbf s}_N\|^2]$. Because ${\bm\Lambda}({\bm\psi})$ is diagonal, this problem
decouples across the frequency bins. Hence,
${\mathbf W}={\mathsf{diag}}(W_1,\ldots,W_N)$, where $W_n
=
\frac{
P\hat\lambda_N^{(n)*}({\bm\psi})
}{
N_0+P|\hat\lambda_N^{(n)}({\bm\psi})|^2
}$ for $n\in[N]$. Thus, SC-FDE requires only one scalar equalization coefficient for each frequency bin. The $n$th recovered symbol can be written as follows:
\begin{equation}
[\hat{\mathbf s}_N]_n
=
[{\mathbf G}_{\rm eq}]_{n,n}s[n]
+
\sum_{n'\neq n}
[{\mathbf G}_{\rm eq}]_{n,n'}s[n']
+
[\hat{\mathbf w}_{\rm eff}]_n.
\label{single_user_scalar_equalized_model}
\end{equation}
The first term is the desired signal, whereas the second term is the residual inter-symbol interference (ISI) after linear equalization.

Define the SNR and LMMSE response of the $n$th frequency bin as $\tilde{\gamma}_n({\bm\psi})\triangleq\gamma_0|\hat\lambda_N^{(n)}({\bm\psi})|^2$ and $q_n({\bm\psi})\triangleq W_n\hat\lambda_N^{(n)}({\bm\psi})=\frac{\tilde{\gamma}_n({\bm\psi})}{1+\tilde{\gamma}_n({\bm\psi})}$, respectively. It follows that ${\mathbf G}_{\rm eq}={\mathbf F}_N^{\mathsf H}
{\mathsf{diag}}(q_1({\bm\psi}),\ldots,q_N({\bm\psi})){\mathbf F}_N$. Thus, ${\mathbf G}_{\rm eq}$ is circulant, and all its diagonal entries are identical:
$[{\mathbf G}_{\rm eq}]_{n,n}=\frac{1}{N}\sum_{i=1}^{N}q_i({\bm\psi})\triangleq\bar q({\bm\psi})$. Since $q_i({\bm\psi})$ is real and nonnegative, ${\mathbf G}_{\rm eq}{\mathbf G}_{\rm eq}^{\mathsf H}={\mathbf F}_N^{\mathsf H}{\mathsf{diag}}(q_1^2({\bm\psi}),\ldots,q_N^2({\bm\psi})){\mathbf F}_N$. Therefore,
\begin{equation}
\sum_{n'=1}^{N}
|[{\mathbf G}_{\rm eq}]_{n,n'}|^2
=
[{\mathbf G}_{\rm eq}{\mathbf G}_{\rm eq}^{\mathsf H}]_{n,n}
=
\frac{1}{N}
\sum_{i=1}^{N}q_i^2({\bm\psi}).
\nonumber
\end{equation}
Hence, the residual-ISI variance in \eqref{single_user_scalar_equalized_model} is given by
\begin{align}
\sigma_{\rm ISI}^2({\bm\psi})
=P\sum_{n'\neq n}|[{\mathbf G}_{\rm eq}]_{n,n'}|^2=P
\left[
\frac{1}{N}\sum_{i=1}^{N}q_i^2({\bm\psi})
-
\bar q^2({\bm\psi})
\right].\nonumber
\end{align}
The filtered-noise covariance is ${\mathbb E}[\hat{\mathbf w}_{\rm eff}\hat{\mathbf w}_{\rm eff}^{\mathsf H}]=N_0{\mathbf F}_N^{\mathsf H}{\mathbf W}{\mathbf W}^{\mathsf H}{\mathbf F}_N$. Its diagonal entries are identical and given by
\begin{equation}
\sigma_{\rm w}^2({\bm\psi})
=
\frac{N_0}{N}
\sum_{i=1}^{N}|W_i|^2
=
\frac{P}{N}
\sum_{i=1}^{N}
q_i({\bm\psi})
\left(1-q_i({\bm\psi})\right).
\nonumber
\end{equation}
Combining the residual ISI and filtered noise gives $\sigma_{\rm eff}^2({\bm\psi})=\sigma_{\rm ISI}^2({\bm\psi})+\sigma_{\rm w}^2({\bm\psi})=P\bar q({\bm\psi})(1-\bar q({\bm\psi}))$. The desired-signal power is $P\bar q^2({\bm\psi})$. Therefore, all recovered symbols have the same post-equalization signal-to-interference-plus-noise ratio (SINR), given by
\begin{equation}
\Gamma_{\rm FDE}({\bm\psi})
=\frac{P\bar q^2({\bm\psi})}{\sigma_{\rm eff}^2({\bm\psi})}
=\frac{\bar q({\bm\psi})}{1-\bar q({\bm\psi})}.
\nonumber
\end{equation}
For Gaussian signaling, the corresponding receiver-constrained rate can be expressed as follows:
\begin{equation}
\hat{\mathsf R}_{\rm sym}({\bm\psi})
=\log_2\left(1+\Gamma_{\rm FDE}({\bm\psi})\right)
=\log_2\left(\frac{1}{1-\bar q({\bm\psi})}\right).
\nonumber
\end{equation}
Since
\begin{equation}
1-\bar q({\bm\psi})
=
\frac{1}{N}
\sum_{n=1}^{N}
\left(
1-
\frac{\tilde{\gamma}_n({\bm\psi})}
{1+\tilde{\gamma}_n({\bm\psi})}
\right)
=
\frac{1}{N}
\sum_{n=1}^{N}
\frac{1}{1+\tilde{\gamma}_n({\bm\psi})},
\nonumber
\end{equation}
the rate becomes
\begin{equation}\nonumber
\hat{\mathsf R}_{\rm sym}({\bm\psi})
=
\log_2
\left(
\frac{1}{
\frac{1}{N}
\sum_{n=1}^{N}
\frac{1}{1+\tilde{\gamma}_n({\bm\psi})}
}
\right).
\end{equation}
Thus, the SC-FDE rate depends on the harmonic mean of $\{1+\tilde{\gamma}_n({\bm\psi})\}_{n=1}^{N}$. Frequency bins with small SNRs can therefore reduce the post-equalization performance. After inclusion of the CP overhead, the achievable spectral efficiency is
\begin{equation}\label{single_user_sc_fde_achievable_rate_pre}
\hat{\mathsf R}({\bm\psi})
=
\frac{N}{N+L_{\rm cp}}
\log_2
\left(
\frac{1}{
\frac{1}{N}
\sum_{n=1}^{N}
\frac{1}{1+\tilde{\gamma}_n({\bm\psi})}
}
\right).
\end{equation}
Using \eqref{eigenvalue_sc_fde}, it can be approximated as follows:
\begin{equation}
\hat{\mathsf R}({\bm\psi})
\approx
\frac{N}{N+L_{\rm cp}}
\log_2
\left(
\frac{N}{
\sum_{n=1}^{N}
\frac{1}{
1+\gamma_0|G(f_n;{\bm\psi})|^2
}
}
\right).\label{single_user_sc_fde_rate_frequency}
\end{equation}
Equation \eqref{single_user_sc_fde_rate_frequency} can replace the infinite-blocklength objective in the PA-placement design.

For fixed $L_{\rm cp}$ and $L_{\rm m}$, as $N\rightarrow\infty$,
the CP overhead vanishes and the frequency grid converges to the
continuous baseband interval. Hence,
\begin{equation}
\lim_{N\rightarrow\infty}
\hat{\mathsf R}({\bm\psi})
=
\log_2
\left[
\left(
\frac{1}{B}
\int_{-\frac{B}{2}}^{\frac{B}{2}}
\frac{
{\rm d}f
}{
1+\gamma_0|G(f;{\bm\psi})|^2
}
\right)^{-1}
\right].
\end{equation}
This limit is the asymptotic receiver-constrained rate of linear
SC-FDE, not the unconstrained Shannon rate. Jensen's inequality gives
\begin{align}
\lim_{N\rightarrow\infty}\hat{\mathsf R}({\bm\psi})
\leq
\frac{1}{B}
\int_{-\frac{B}{2}}^{\frac{B}{2}}
\log_2
(
1+\gamma_0|G(f;{\bm\psi})|^2
)
{\rm d}f
={\mathsf R}({\bm\psi}).\nonumber
\end{align}
Thus, linear SC-FDE generally incurs a rate loss relative to the infinite-blocklength benchmark. Symbolwise linear equalization reduces the FS channel to a common post-equalization SINR and does not exploit the frequency bins through optimal joint decoding.
\subsection{Multiuser Case}
We now extend the finite-block CP-SC-FDE model to the multiuser uplink case. We assume that uplink synchronization, such as timing advance, compensates for the user-dependent bulk arrival offsets. For user $k$, let ${\mathbf s}_{k,N}$ denote the transmitted block and let ${\mathbf H}_{k,{\rm eff}}({\bm\psi})$ denote the corresponding circulant channel matrix after CP removal. The received block is given by
\begin{equation}
{\mathbf r}_{\rm eff}
=
\sum_{k=1}^{K}
{\mathbf H}_{k,{\rm eff}}({\bm\psi}){\mathbf s}_{k,N}
+
{\mathbf w}_{\rm eff}.
\label{linear_multiuser_system_model_sc_fde}
\end{equation}
To obtain a valid circular-convolution model for all users, the common CP length should satisfy $L_{\rm cp}\geq\max_{k\in[K]}L_{k,{\rm m}}$, where $L_{k,{\rm m}}$ denotes the retained channel memory of user $k$ after truncation and causal re-indexing. Each circulant channel matrix can be diagonalized as ${\mathbf H}_{k,{\rm eff}}({\bm\psi})={\mathbf F}_N^{\mathsf H}{\bm\Lambda}_k({\bm\psi}){\mathbf F}_N$, where ${\bm\Lambda}_k({\bm\psi})={\mathsf{diag}}(\hat{\lambda}_{k,N}^{(1)}({\bm\psi}),\ldots,\hat{\lambda}_{k,N}^{(N)}({\bm\psi}))$. Applying the DFT to \eqref{linear_multiuser_system_model_sc_fde} gives $\tilde{\mathbf r}_{\rm eff}=\sum_{k=1}^{K}{\bm\Lambda}_k({\bm\psi})\tilde{\mathbf s}_{k,N}+\tilde{\mathbf w}_{\rm eff}$, where $\tilde{\mathbf s}_{k,N}={\mathbf F}_N{\mathbf s}_{k,N}$ and
$\tilde{\mathbf w}_{\rm eff}
={\mathbf F}_N{\mathbf w}_{\rm eff}
\sim{\mathcal{CN}}({\mathbf 0},N_0{\mathbf I}_N)$.

We consider parallel decoding, where each user is decoded
independently and the signals from the remaining users are treated
as Gaussian interference. When decoding user $k$, the
interference-plus-noise variance at frequency bin $n$ is
\begin{equation}
\sigma_{k,n}^2({\bm\psi})
=
N_0+
\sum_{k'\neq k}
P_{k'}
|\hat{\lambda}_{k',N}^{(n)}({\bm\psi})|^2.
\nonumber
\end{equation}
The corresponding frequency-bin SINR is
\begin{align}
\gamma_{k,n}^{\rm P}({\bm\psi})
\triangleq\frac{P_k|\hat{\lambda}_{k,N}^{(n)}({\bm\psi})|^2}{\sigma_{k,n}^2({\bm\psi})}=
\frac{P_k|\hat{\lambda}_{k,N}^{(n)}({\bm\psi})|^2}
{N_0+\sum_{k'\neq k}P_{k'}|\hat{\lambda}_{k',N}^{(n)}({\bm\psi})|^2}\approx\frac{
A_k(f_n;{\bm\psi})
}{
1+\hat A_k(f_n;{\bm\psi})
},
\end{align}
where $A_{k}(f;{\bm\psi})=\frac{P_k}{N_0}|G_k(f;{\bm\psi})|^2$, $\hat A_{k}(f;{\bm\psi})=\sum_{k'\neq k}A_{k'}(f;{\bm\psi})$, and $f_n$ is defined in \eqref{frequency_bin_def}.

The single-user CP-SC-FDE derivation that leads to \eqref{single_user_sc_fde_achievable_rate_pre} gives the receiver-constrained sum-rate under Gaussian signaling:
\begin{equation}
\hat{\mathsf R}_{\rm P}({\bm\psi})
=
\frac{N}{N+L_{\rm cp}}
\sum_{k=1}^{K}
\log_2
\left(
\frac{1}{
\frac{1}{N}\sum_{n=1}^{N}
\frac{1}{
1+\gamma_{k,n}^{\rm P}({\bm\psi})
}
}
\right).
\label{multiple_user_sc_fde_achievable_rate}
\end{equation}
Thus, the rate of each user depends on the harmonic mean of its
frequency-bin SINR terms. Consequently, frequency bins with strong
multiuser interference can substantially reduce the corresponding
post-equalization rate.

For fixed $L_{\rm cp}$ and $\{L_{k,{\rm m}}\}_{k=1}^{K}$, as
$N\rightarrow\infty$, the CP overhead vanishes and the discrete
frequency grid converges to the continuous baseband interval.
Using the frequency-response approximation in
\eqref{eigenvalue_sc_fde}, we obtain
\begin{equation}
\lim_{N\rightarrow\infty}
\hat{\mathsf R}_{\rm P}({\bm\psi})
=
\sum_{k=1}^{K}
\log_2
\left[
\left(
\frac{1}{B}
\int_{-\frac{B}{2}}^{\frac{B}{2}}
\frac{
{\rm d}f
}{
1+
\frac{
A_k(f;{\bm\psi})
}{
1+\hat A_k(f;{\bm\psi})
}
}
\right)^{-1}
\right].
\end{equation}
By Jensen's inequality, this asymptotic CP-SC-FDE rate satisfies
\begin{equation}
\lim_{N\rightarrow\infty}
\hat{\mathsf R}_{\rm P}({\bm\psi})
\leq
{\mathsf R}_{\rm P}({\bm\psi}).
\nonumber
\end{equation}
Therefore, parallel CP-SC-FDE incurs a rate loss relative to ideal frequency-domain coding because each user is reduced to a common post-equalization SINR and the remaining users are treated as interference.

The rate in \eqref{multiple_user_sc_fde_achievable_rate} remains an explicit function of the PA position vector ${\bm\psi}$, which determines both the desired FS gains and the spectral interference among users. It can therefore replace the infinite-blocklength parallel-decoding objective in the multiuser PA-placement problem, and the same element-wise optimization framework can be applied.

\section{Numerical Results}\label{Section:Numerical Results}
We validate the analytical results and the proposed PA-placement design through numerical simulations. Unless stated otherwise, $f_{\rm c}=28$ GHz, $n_{\rm eff}=1.4$, $d=3$ m, $D_y=20$ m, $L=1$ m, $\kappa=0.08$ dB/m, $\Delta=\lambda/2$, $B=50$ MHz, $I_{\rm GL}=50$, and $\gamma_0=70$ dB for $k\in[K]$. The waveguide covers $[-D_x/2,D_x/2]$ and contains $M=D_x/L$ segments. For the average-rate results, the users are independently and uniformly distributed over $[-D_x/2,D_x/2]\times[-D_y/2,D_y/2]$, and each simulation point represents an average over $1000$ deployments. We set $K=4$ in the multiuser simulations. The ideal sinc pulse is used with $T_{\rm s}=1/B$. Each one-dimensional PA update searches a uniform grid of $Q=1001$ points. An FS design optimizes the placement under the exact FS model, whereas an FF design uses the conventional FF approximation. An FS evaluation always evaluates the resulting placement with the exact channel. An FF evaluation instead reports the rate predicted by the FF model.
\subsection{Frequency-Selectivity Characterization}

\begin{figure}[!t]
\centering
\includegraphics[width=0.8\textwidth]{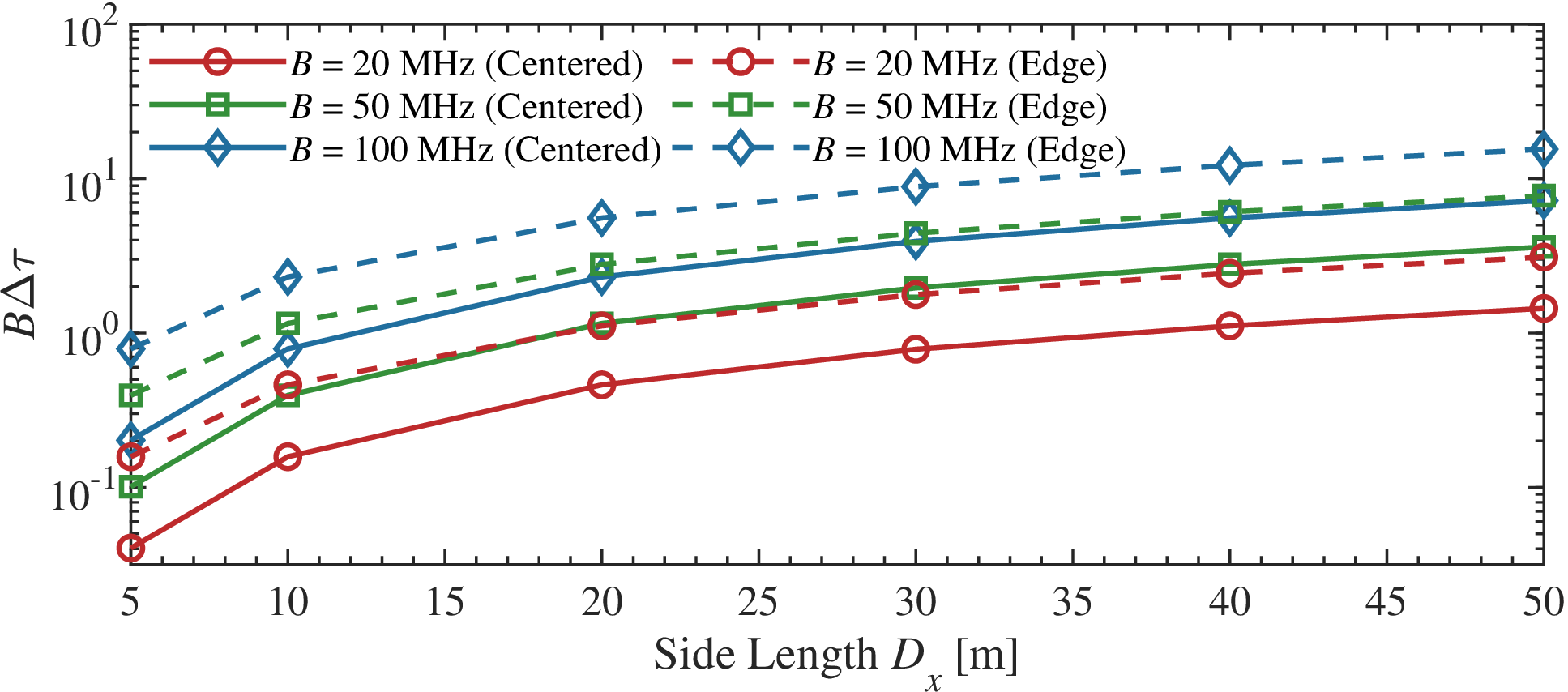}
\caption{Bandwidth-delay-spread product versus $D_x$.}
\label{Figure_Single_User_Figure1}
\vspace{-10pt}
\end{figure}

{\figurename} {\ref{Figure_Single_User_Figure1}} plots the bandwidth-delay-spread product $B\Delta\tau$ for midpoint PA placement, i.e., $\psi_{m}=\psi_{0}^{m}+L/2$ for $m\in[M]$. The centered and edge users lie directly below the waveguide axis. Although the considered bandwidths are moderate, $B\Delta\tau$ increases rapidly with $D_x$ because a wider aperture increases the disparity among the PA-assisted path delays. The edge user has a larger value than the centered user because its propagation distances are more asymmetric. In particular, $B\Delta\tau$ exceeds $1$ for several ordinary aperture-bandwidth combinations. Thus, a small fractional bandwidth alone does not ensure an FF PASS channel; the physical aperture must also be considered. 

\begin{figure}[!t]
\centering
    \subfigure[$D_x=5$ m.]
    {
        \includegraphics[height=0.35\textwidth]{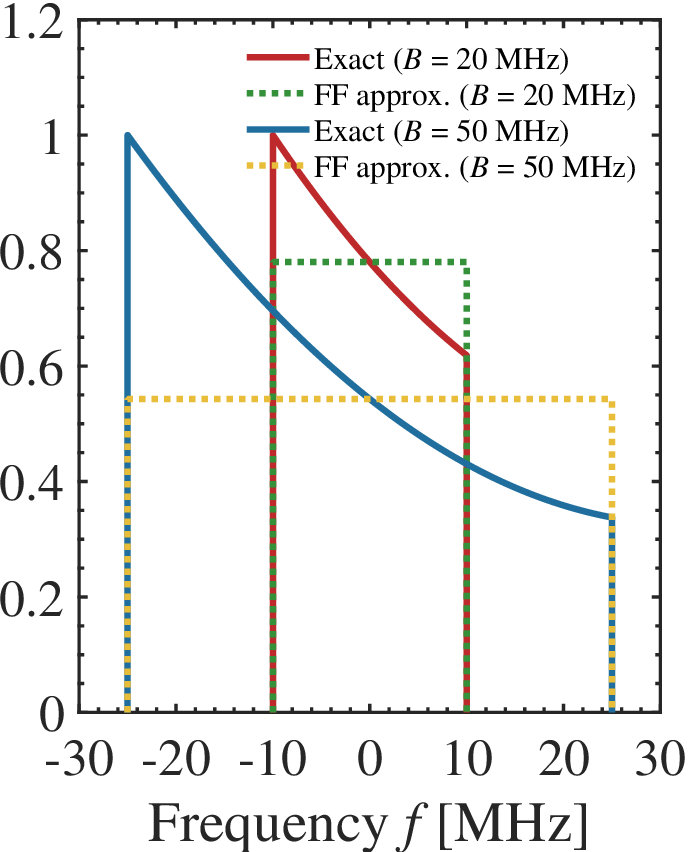}
        \label{Figure: Single_User_Figure2a}
    }
    \subfigure[$D_x=20$ m.]
    {
        \includegraphics[height=0.35\textwidth]{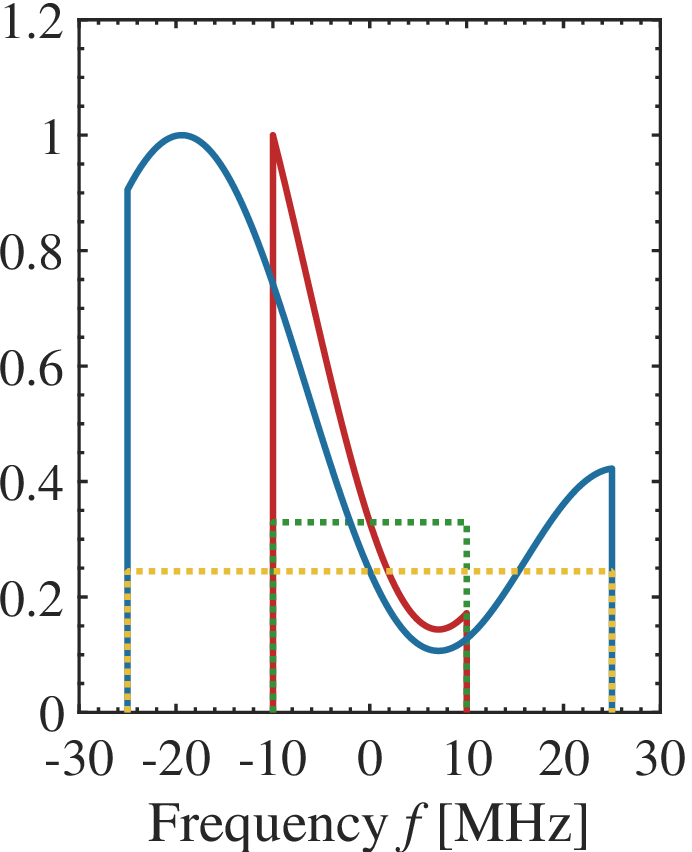}
        \label{Figure: Single_User_Figure2b}
    }
    \subfigure[$D_x=40$ m.]
    {
        \includegraphics[height=0.35\textwidth]{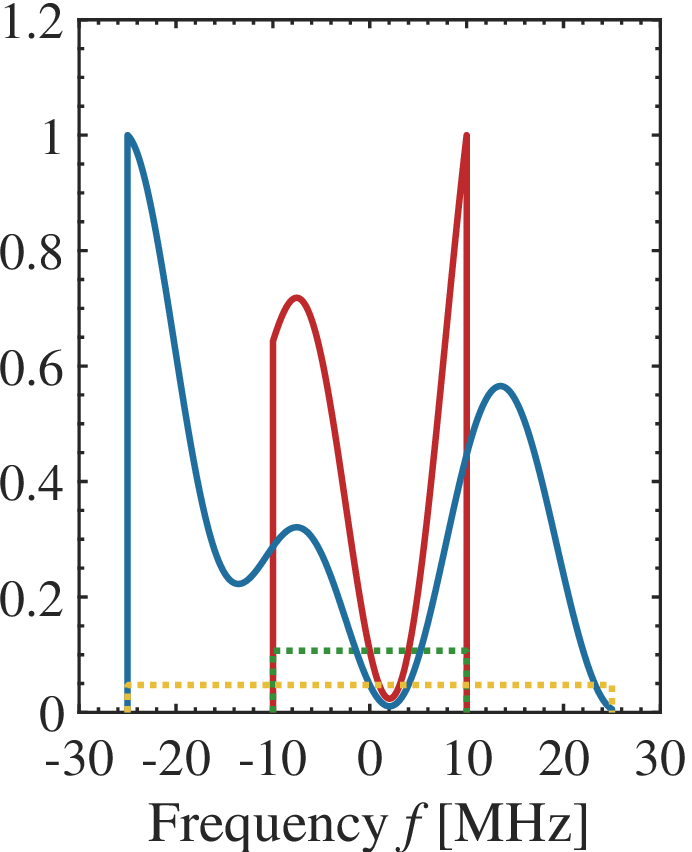}
        \label{Figure: Single_User_Figure2c}
    }
\caption{Normalized channel power spectrum.}
\label{Figure: Single_User_Figure2}
\vspace{-10pt}
\end{figure}

{\figurename} {\ref{Figure: Single_User_Figure2}} illustrates the spectral distortion of a centered user for midpoint placement. For each bandwidth, the plotted exact spectrum is $\frac{|G(f;{\bm\psi})|^2}{\max_{|\nu|\leq B/2}|G(\nu;{\bm\psi})|^2}$. The FF curve uses the constant value $|h_{\rm eff}({\bm\psi})|^2$ with the same normalization. At $D_x=5$ m, the exact response changes relatively smoothly over each band. As $D_x$ increases to $20$ m and $40$ m, pronounced frequency variations and deep spectral notches appear. The constant FF response cannot capture these features. This observation is consistent with the increase in $B\Delta\tau$ shown in {\figurename} {\ref{Figure_Single_User_Figure1}}. These two figures together highlight the necessity of considering frequency selectivity in practical PASS channels.

\begin{figure}[!t]
\centering
\subfigure[Centered user.]
    {
        \includegraphics[height=0.35\textwidth]{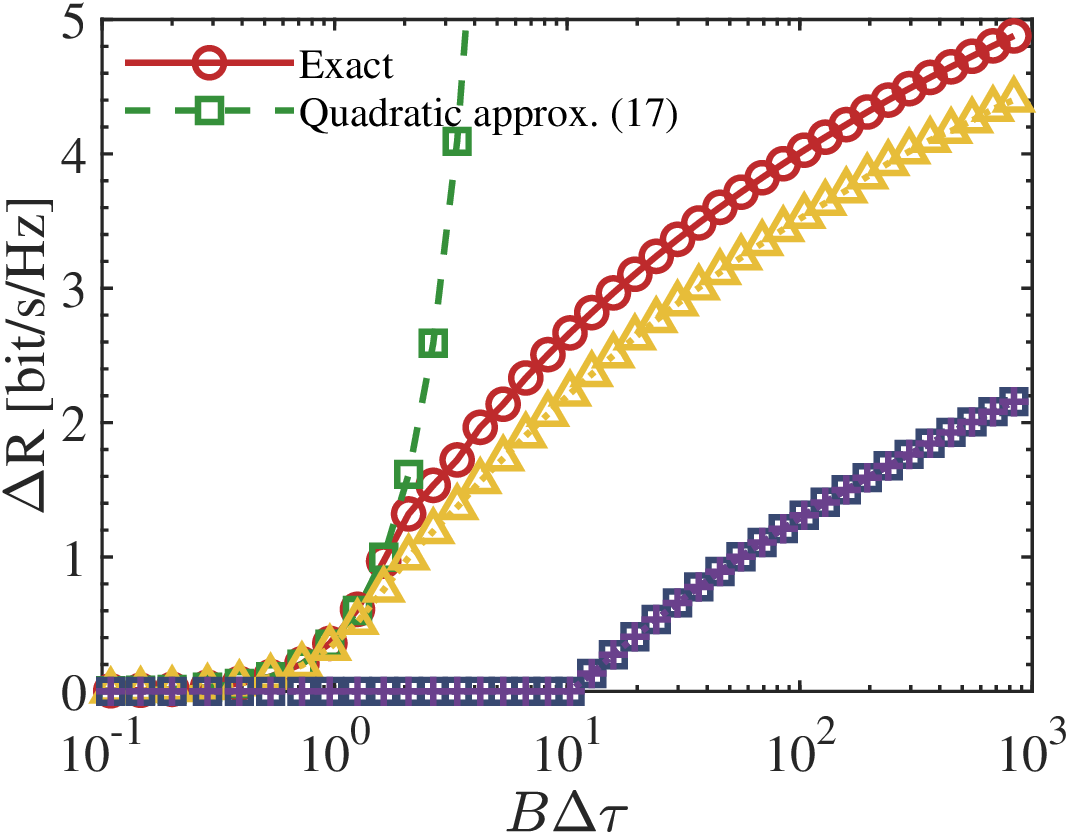}
        \label{Figure: Single_User_Figure3a}
    }
    \subfigure[Edge user.]
    {
        \includegraphics[height=0.35\textwidth]{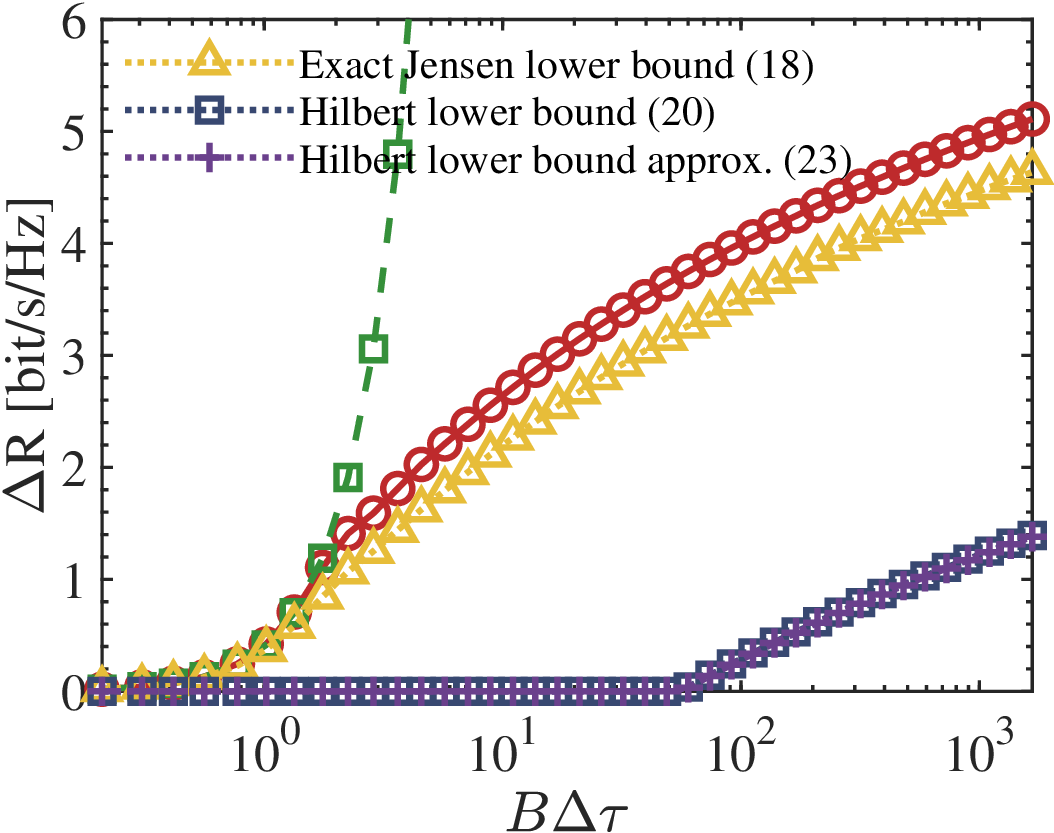}
        \label{Figure: Single_User_Figure3b}
    }
\caption{FF rate overestimation $\Delta{\mathsf{R}}$ versus $B\Delta\tau$ for $L=0.1$ m.}
\label{Figure: Single_User_Figure3}
\vspace{-10pt}
\end{figure}

{\figurename} {\ref{Figure: Single_User_Figure3}} validates the small- and large-delay-spread characterizations of the FF rate overestimation for $L=0.1$ m under carrier-phase-aligned PA placement \cite{ouyang2025array,ouyang2025uplink}. The horizontal axis shows the resulting bandwidth-delay-spread product $B\Delta\tau$, obtained by varying the waveguide aperture $D_x$ at a fixed bandwidth $B$. For small $B\Delta\tau$, the quadratic approximation in \eqref{rate_gap_quadratic_onset} closely follows the exact FF rate overestimation, which verifies the small-delay-spread characterization in Theorem \ref{them:rate_difference}. As $B\Delta\tau$ increases, the approximation gradually deviates from the exact result because it is valid only in the small-delay-spread regime. The Jensen lower bound in \eqref{rate_gap_Jensen_bound} captures the growth of the exact overestimation over the entire plotted range. The Hilbert lower bound is obtained from \eqref{rate_gap_effective_paths}, whereas its integral approximation substitutes \eqref{effective_paths_integral_approx} and \eqref{effective_snr_integral_approx} into \eqref{rate_gap_effective_paths_case}. The two curves nearly overlap, which validates the sum-to-integral approximation. The Hilbert bound remains zero until $M_{\rm eff}>1+f_{\rm c}/B$, after which it increases with the delay spread and approaches the double-logarithmic behavior characterized in \eqref{rate_gap_aperture_scaling_tau}. These results therefore also validate the large-delay-spread characterization.

\subsection{Infinite-Blocklength Rates}

\begin{figure}[!t]
\centering
    \subfigure[$B=20$ MHz.]
    {
        \includegraphics[height=0.3\textwidth]{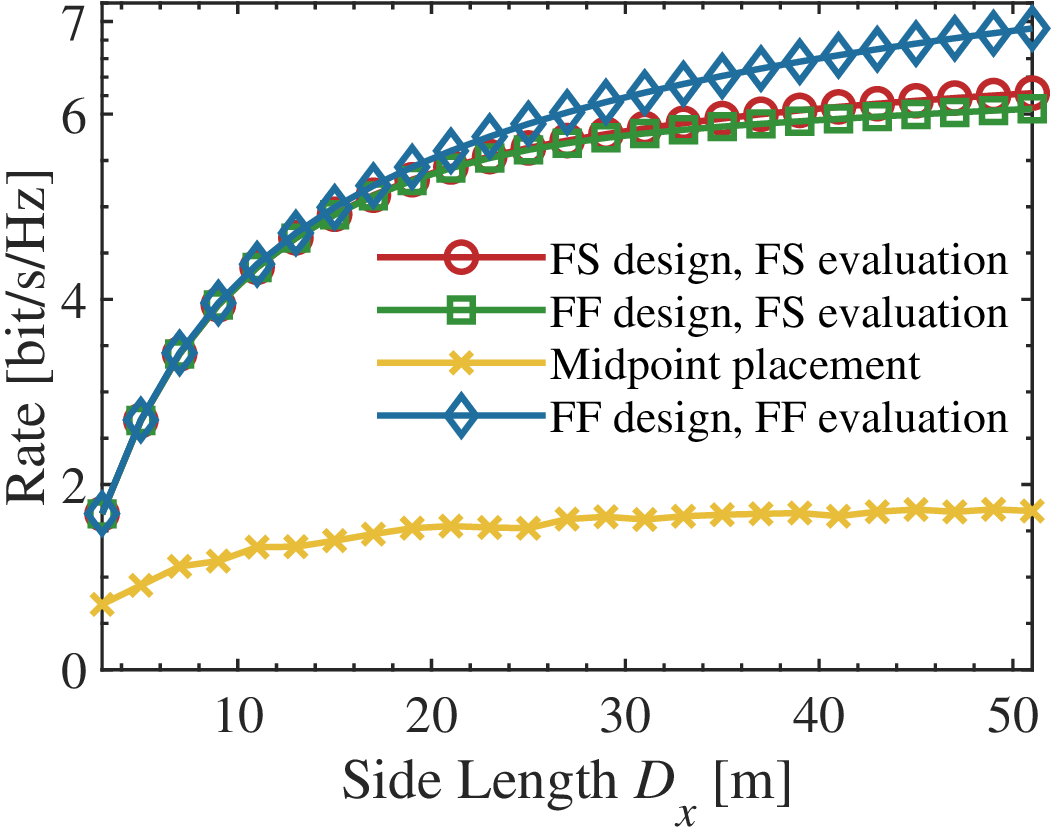}
        \label{Figure: Infinite_Blocklength_Figure1a}
    }
    \subfigure[$B=50$ MHz.]
    {
        \includegraphics[height=0.3\textwidth]{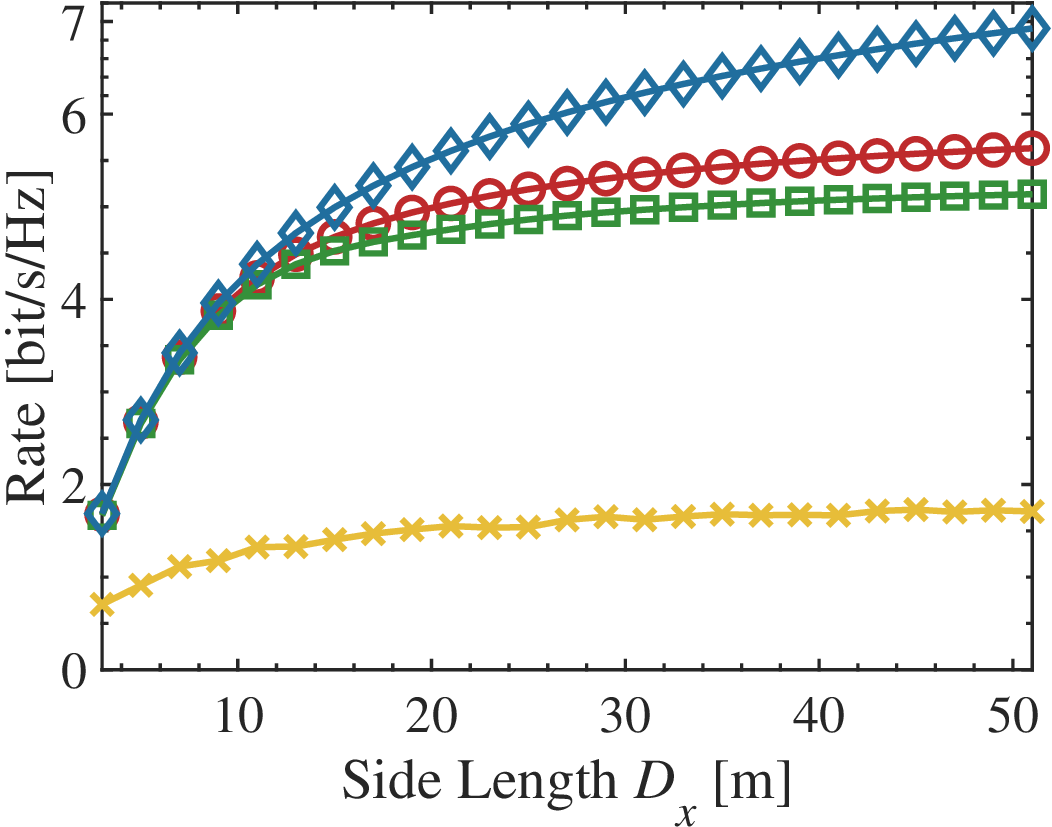}
        \label{Figure: Infinite_Blocklength_Figure1b}
    }\\
    \subfigure[$B=100$ MHz.]
    {
        \includegraphics[height=0.3\textwidth]{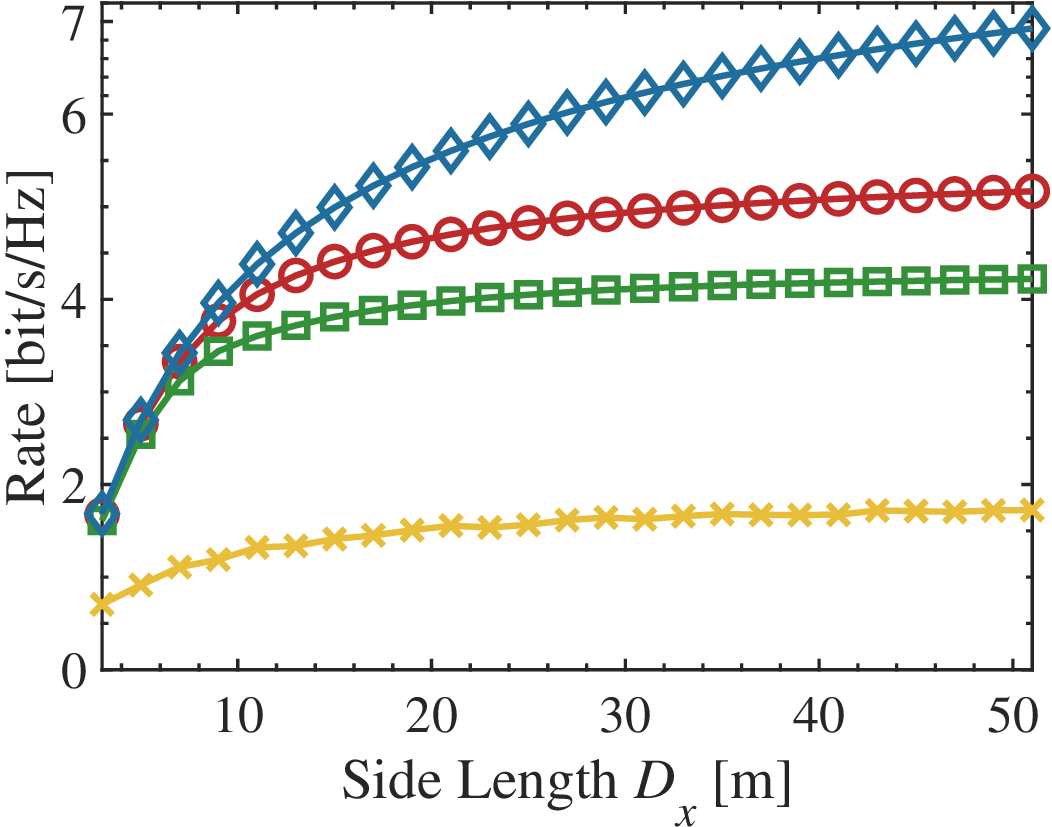}
        \label{Figure: Infinite_Blocklength_Figure1c}
    }
    \subfigure[$B=200$ MHz.]
    {
        \includegraphics[height=0.3\textwidth]{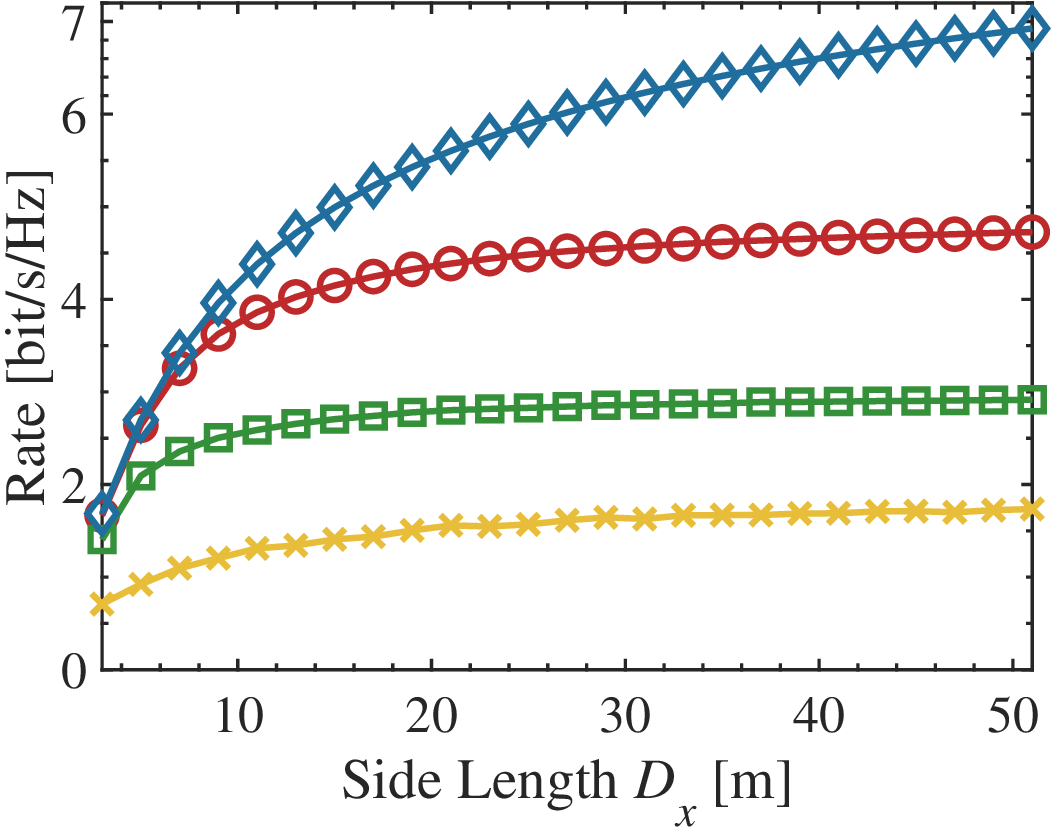}
        \label{Figure: Infinite_Blocklength_Figure1d}
    }
\caption{Infinite-blocklength rate versus $D_x$.}
\label{Figure: Infinite_Blocklength_Figure}
\vspace{-10pt}
\end{figure}

{\figurename} {\ref{Figure: Infinite_Blocklength_Figure}} compares the single-user rates for four bandwidths. The midpoint curve evaluates fixed midpoint placement with the FS model. At small $B$ and $D_x$, the FS- and FF-designed placements achieve similar exact rates because the channel is nearly FF. Their separation grows with either parameter, and hence with $B\Delta\tau$. The FS design consistently provides the largest exact rate, which demonstrates the effectiveness of the element-wise optimization under the FS objective. In contrast, the FF evaluation becomes increasingly optimistic as the bandwidth or aperture grows. Midpoint placement performs poorly because it accounts for neither carrier-phase alignment nor FS distortion. A larger $D_x$ activates more segments and generally improves the FS-aware rate through the resulting array gain, although frequency selectivity reduces the attainable gain.

\begin{figure}[!t]
\centering
\includegraphics[width=0.8\textwidth]{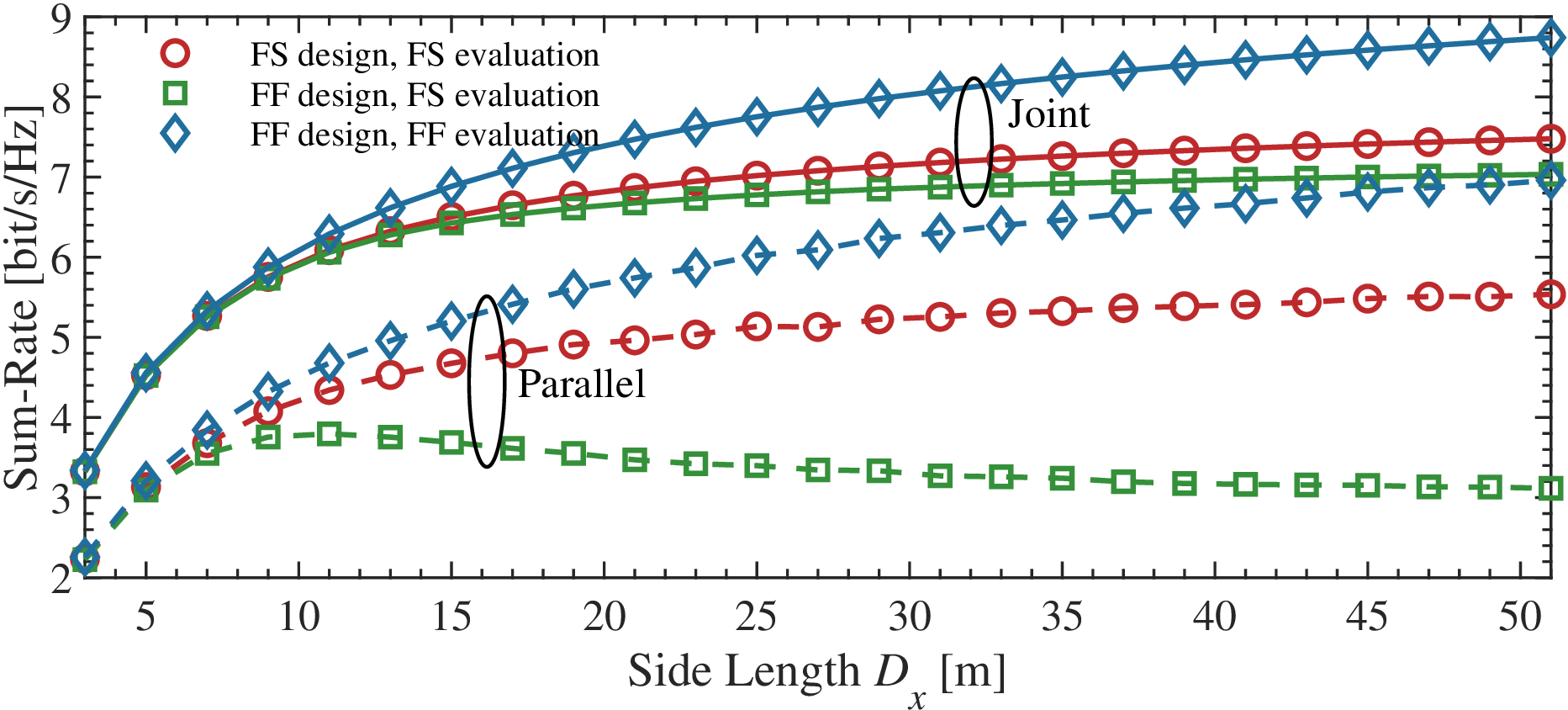}
\caption{Infinite-blocklength sum-rate versus $D_x$.}
\label{Infinite_Blocklength_Figure2}
\vspace{-10pt}
\end{figure}

{\figurename} {\ref{Infinite_Blocklength_Figure2}} presents the corresponding multiuser sum-rates at $B=50$ MHz. Joint decoding outperforms parallel decoding because it avoids treating the other users as interference. For both decoding strategies, the FS design gives the largest sum-rate under exact evaluation. The benefit is especially pronounced for parallel decoding because PA placement must manage both the desired spectra and the frequency-dependent inter-user interference. Under the FF design, the exact parallel-decoding sum-rate first increases and then decreases with $D_x$. The FF objective favors carrier-frequency alignment but does not represent the spectral interference seen by the exact receiver. The resulting mismatch can outweigh the additional array gain. In contrast, the FS design preserves a monotonic improvement over the simulated range. These results also show why the placement must be optimized separately for joint and parallel decoding.

\subsection{Finite-Block CP-SC-FDE Rates}
For finite-block transmission, we consider $11$ block sizes over $256\leq N\leq8192$ and set $\epsilon_{\rm p}=10^{-3}$. The sinc matched pulse then satisfies $BT_{\rm p}=112$, or $T_{\rm p}=2.24~\mu$s at $B=50$ MHz, and retains at least $99.9\%$ of its energy. For each realization, the CP length is set to the retained channel memory for one user and to the largest retained user-channel memory for multiple users. Thus, $L_{\rm cp}=L_{\rm m}$ in the single-user case and $L_{\rm cp}=\max_kL_{k,{\rm m}}$ in the multiuser case. 

\begin{figure}[!t]
\centering
    \subfigure[Single-user case.]
    {
        \includegraphics[width=0.8\textwidth]{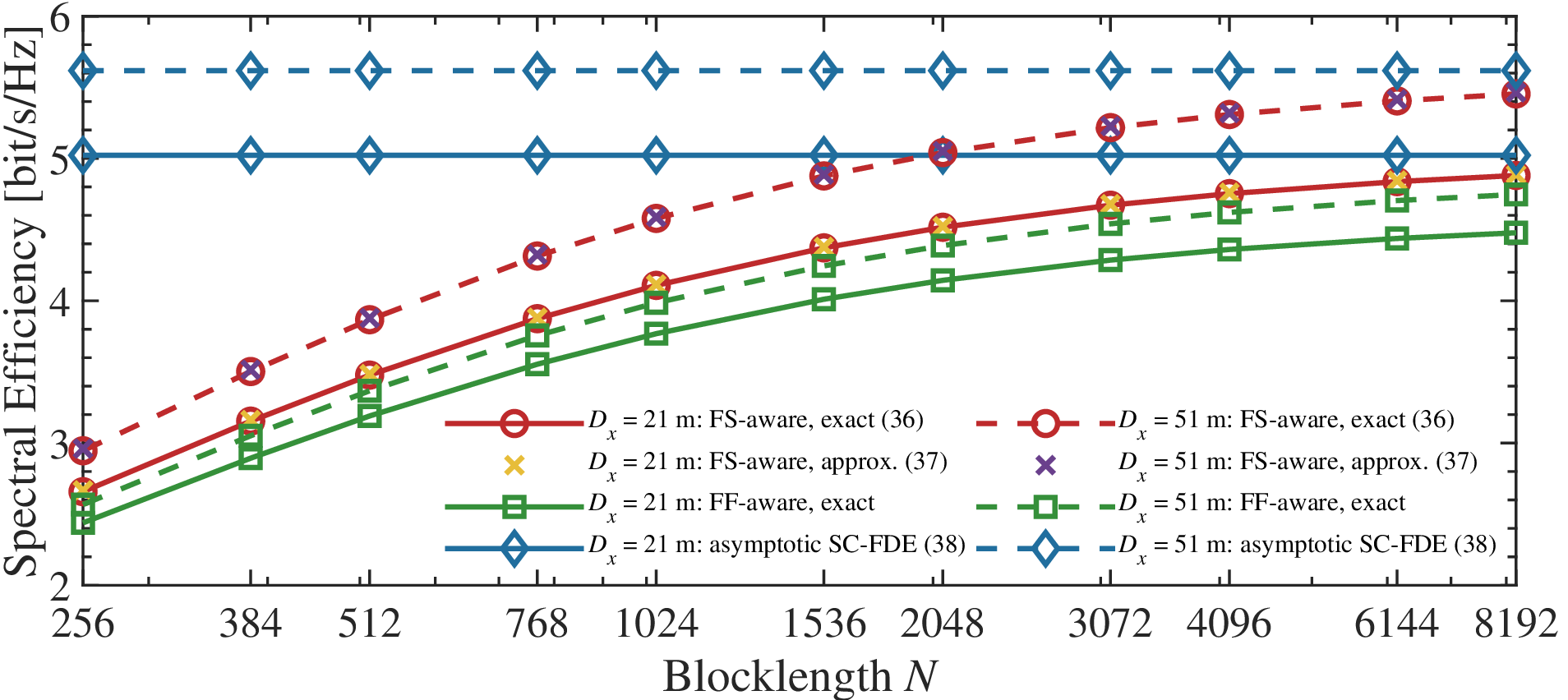}
        \label{Figure: Finite_Blocklength_Figure1}
    }\\
    \subfigure[Multiuser case.]
    {
        \includegraphics[width=0.8\textwidth]{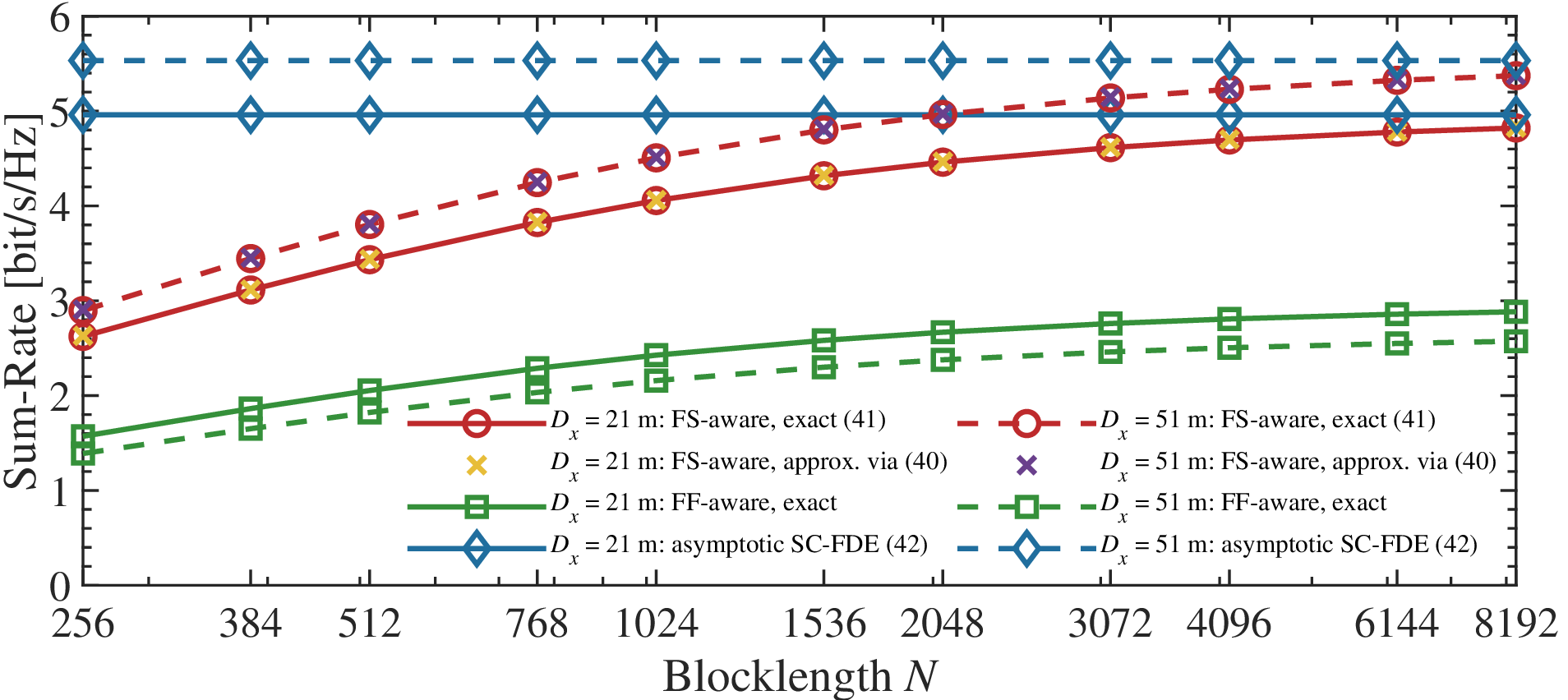}
        \label{Figure: Finite_Blocklength_Figure2}
    }
\caption{Finite-block CP-SC-FDE rate versus blocklength $N$.}
\label{Figure: Finite_Blocklength_Figure}
\vspace{-10pt}
\end{figure}

{\figurename} {\ref{Figure: Finite_Blocklength_Figure}} shows the CP-SC-FDE results for $D_x\in\{21,51\}$ m. The exact and frequency-response-based approximate evaluations of the FS-aware placement nearly coincide, which supports \eqref{single_user_sc_fde_rate_frequency} and its multiuser counterpart. As $N$ increases, the CP overhead decreases and the rates approach the asymptotic SC-FDE limits derived in Section \ref{Section: Finite-Block CP-SC-FDE}. The FS-aware placement substantially outperforms the FF placement for both apertures. In the single-user case, the larger aperture provides the higher rate because its additional segments yield more array gain. In the multiuser case, however, the FF result for $D_x=51$ m is lower than that for $D_x=21$ m. This reversal agrees with the infinite-blocklength parallel-decoding result in {\figurename} {\ref{Infinite_Blocklength_Figure2}}: an FF placement cannot control the exact frequency-dependent inter-user interference. Once the FS-aware objective is used, $D_x=51$ m again outperforms $D_x=21$ m. The comparison highlights the importance of incorporating frequency selectivity into PA placement for finite-block multiuser transmission.

\section{Conclusion}\label{Section: Conclusion}
We showed that extending PASS deployment reduces user-to-PA path loss and improves coverage but also enlarges the aperture-induced differential propagation delays. Consequently, aperture-induced frequency selectivity can become significant even at moderate bandwidths. In the infinite-blocklength regime, we proved that the FF model overestimates the exact FS rate under carrier-phase-aligned PA placement. For small delay spread $\Delta\tau$, this overestimation grows approximately quadratically with $\Delta\tau$. As the waveguide aperture and the resulting delay spread become large, it grows at least double logarithmically with $\Delta\tau$. For finite-block transmission, we characterized the achievable rate of PASS under a practical CP-SC-FDE framework and showed that the required CP must cover both the path-delay spread and the retained pulse tails. Its length therefore increases with the delay spread and, consequently, with the PASS aperture. For all considered cases, we proposed FS-aware PA-placement methods. The results showed that FF placement can forfeit or even reverse the gain from additional segments, whereas FS-aware placement preserves this gain by accounting for aperture-induced frequency selectivity. These findings establish frequency selectivity as a key design consideration for large-scale PASS deployment.

\begin{appendix}
\subsection{Proof of Theorem \ref{thm:single_user_asymptotic_rate}}\label{proof_thm:single_user_asymptotic_rate}
Since ${\mathbf A}_N({\bm\psi})$ is Hermitian positive semidefinite, let its eigenvalues be denoted by $\lambda_1^{(N)},\lambda_2^{(N)},\ldots,\lambda_N^{(N)}\geq 0$. Using the fact that the determinant equals the product of the eigenvalues, we obtain $\det({\mathbf I}_N+{\mathbf A}_N({\bm\psi}))=\prod_{n=1}^{N}(1+\lambda_n^{(N)})$. Hence, $\frac{1}{N}I({\mathbf s}_N;{\mathbf r}_N)=\frac{1}{N}\sum_{n=1}^{N}\log_2(1+\lambda_n^{(N)})$. 

By construction, ${\mathbf G}_N({\bm\psi})$ is Toeplitz with entries $[{\mathbf G}_N({\bm\psi})]_{\hat n,n}=\hat g[\hat n-n;{\bm\psi}]$ for $\hat n,n\in[N]$. Its generating symbol is $\hat G({\rm e}^{{\rm j}\omega};{\bm\psi})\triangleq \sum_{\ell\in{\mathbb{Z}}}\hat g[\ell;{\bm\psi}]{\rm e}^{{\rm j}\omega\ell}$ for $\omega\in[-\pi,\pi]$. According to the properties of asymptotically equivalent Toeplitz matrices established by Gray \cite[Theorem 12]{gray2006toeplitz}, the product of Toeplitz matrices is asymptotically equivalent to a Toeplitz matrix generated by the product of their corresponding symbols. Since ${\mathbf G}_N^{\mathsf H}({\bm\psi})$ is generated by $\hat G^*({\rm e}^{{\rm j}\omega};{\bm\psi})$, the matrix sequence $\{{\mathbf A}_N({\bm\psi})\}$ is asymptotically equivalent to a Toeplitz matrix sequence with symbol $A({\rm e}^{{\rm j}\omega};{\bm\psi})=\gamma_0|\hat G({\rm e}^{{\rm j}\omega};{\bm\psi})|^2$.

Szeg{\H{o}}'s fundamental theorem \cite{grenander1958toeplitz} states that for a sequence of Hermitian Toeplitz matrices generated by a real-valued, essentially bounded symbol $A({\rm e}^{{\rm j}\omega};{\bm\psi})$, the arithmetic mean of a continuous function $F(\cdot)$ of its eigenvalues converges to the integral of the function evaluated on the symbol, namely,
\begin{align}\label{eq:Szego_final}
\lim_{N\rightarrow\infty}\frac{1}{N}\sum_{n=1}^{N}F(\lambda_n^{(N)})
=
\frac{1}{2\pi}\int_{-\pi}^{\pi}F(A({\rm e}^{{\rm j}\omega};{\bm\psi})){\rm d}\omega.
\end{align}
Applying \eqref{eq:Szego_final} with $F(x)=\log_2(1+x)$ to our normalized rate expression ${\mathsf R}({\bm\psi})=\lim_{N\rightarrow\infty}\frac{1}{N}I({\mathbf s}_N;{\mathbf r}_N)$, we obtain
\begin{align}\label{eq:normalized_MI_sum_eigs_final}
{\mathsf R}({\bm\psi})
=
\frac{1}{2\pi}\int_{-\pi}^{\pi}
\log_2(
1+\gamma_0|\hat G({\rm e}^{{\rm j}\omega};{\bm\psi})|^2)
{\rm d}\omega.
\end{align}

We next relate $\hat G({\rm e}^{{\rm j}\omega};{\bm\psi})$ to the continuous-time channel response. Since the discrete-time taps $\hat g[\ell;{\bm\psi}]$ are obtained by symbol-rate sampling of the continuous-time matched channel response $\hat g(t;{\bm\psi})=g(t;{\bm\psi})\ast c_p(t)$, the Poisson summation formula \cite{oppenheim1999discrete} gives
\begin{align}\label{poisson_summation_formula}
\hat G({\rm e}^{{\rm j}\omega};{\bm\psi})
=
\frac{1}{T_{\rm s}}\sum_{k=-\infty}^{\infty}
G_{\mathcal F}\left(\frac{\omega}{2\pi T_{\rm s}}-\frac{k}{T_{\rm s}};{\bm\psi}\right),
\end{align}
where $G_{\mathcal F}(f;{\bm\psi})\triangleq G(f;{\bm\psi})|P(f)|^2$ is the Fourier transform of $\hat g(t;{\bm\psi})$. Under the adopted signaling model, $T_{\rm s}=\frac{1}{B}$ and $|P(f)|^2=\frac{1}{B}\operatorname{rect}(\frac{f}{B})$. Thus, the shifted spectra in \eqref{poisson_summation_formula} do not overlap over the fundamental interval $\omega\in[-\pi,\pi]$, except at the band-edge points of measure zero. Consequently,
\begin{align}\label{eq:normalized_MI_sum_eigs_final_pre1}
\hat G({\rm e}^{{\rm j}\omega};{\bm\psi})
=
\frac{1}{T_{\rm s}}G_{\mathcal F}\left(\frac{\omega}{2\pi T_{\rm s}};{\bm\psi}\right)
=
B G(f;{\bm\psi})|P(f)|^2=G(f;{\bm\psi}),
\end{align}
where $f=\frac{\omega B}{2\pi}\in\left[-\frac{B}{2},\frac{B}{2}\right]$. By substituting \eqref{eq:normalized_MI_sum_eigs_final_pre1} into \eqref{eq:normalized_MI_sum_eigs_final} and using the change of variable $\omega=\frac{2\pi f}{B}$, the final result follows immediately. This completes the proof.
\subsection{Proof of Theorem \ref{them:rate_difference}}\label{proof_them:rate_difference}
Using $e^{-{\rm j}\omega\delta_{m}}=1-{\rm j}\omega\delta_{m}-\frac{\omega^2\delta_{m}^2}{2}+\frac{{\rm j}\omega^3\delta_{m}^3}{6}+\frac{\omega^4\delta_{m}^4}{24}+{\mathcal{O}}(\omega^5\delta_{m}^5)$, we obtain
\begin{align}\nonumber
\Phi_{\tau}(\omega)=1-\frac{\mu_2}{2}\omega^2+\frac{{\rm j}\mu_3}{6}\omega^3+\frac{\mu_4}{24}\omega^4+{\mathcal{O}}((\omega\Delta\tau)^5),
\end{align}
where $\Delta\tau=\tau_{\max}-\tau_{\min}$, and the linear term disappears because $\mu_1=0$. Taking the squared magnitude gives
\begin{align}\nonumber
\lvert\Phi_{\tau}(\omega)\rvert^2=1-\mu_2\omega^2+\left(\frac{\mu_2^2}{4}+\frac{\mu_4}{12}\right)\omega^4
+{\mathcal{O}}((\omega\Delta\tau)^6).
\end{align}
Therefore,
\begin{align}
\lvert G(f)\rvert^2&=A^2\left[1-\mu_2(2\pi f)^2+\left(\frac{\mu_2^2}{4}+\frac{\mu_4}{12}\right)(2\pi f)^4\right]+{\mathcal{O}}((f\Delta\tau)^6).\nonumber
\end{align}
It follows that
\begin{align}
1+\gamma_0\lvert G(f)\rvert^2&=(1+\rho)[1-a(2\pi f)^2+b(2\pi f)^4+{\mathcal{O}}((f\Delta\tau)^6)],\nonumber
\end{align}
where $a\triangleq\frac{\rho}{1+\rho}\mu_2$ and $b\triangleq\frac{\rho}{1+\rho}(\frac{\mu_2^2}{4}+\frac{\mu_4}{12})$. Using $\ln(1+x)=x-\frac{x^2}{2}+{\mathcal{O}}(x^3)$ with $x=-a(2\pi f)^2+b(2\pi f)^4+{\mathcal{O}}((f\Delta\tau)^6)$, we obtain
\begin{align}
\log_2(1+\gamma_0\lvert G(f)\rvert^2)=\log_2(1+\rho)-\frac{a}{\ln{2}}(2\pi f)^2
+\frac{1}{\ln{2}}\left(b-\frac{a^2}{2}\right)(2\pi f)^4+{\mathcal{O}}((f\Delta\tau)^6).\nonumber
\end{align}
For even $q$, $\frac{1}{B}\int_{-\frac{B}{2}}^{\frac{B}{2}}(2\pi f)^{q}{\rm d}f=\frac{(B\pi)^q}{q+1}$. We therefore obtain
\begin{align}
\Delta{\mathsf{R}}&={\mathsf{R}}_{\rm FF}-{\mathsf{R}}_{\rm FS}=\frac{1}{B}\int_{-\frac{B}{2}}^{\frac{B}{2}}
\log_2\left(
\frac{1+\rho}{1+\gamma_0|G(f)|^2}
\right){\rm d}f\nonumber\\
&=\frac{a\pi^2B^2}{3\ln{2}}-\frac{(b-\frac{a^2}{2})\pi^4B^4}{5\ln{2}}+{\mathcal{O}}((B\Delta\tau)^6)\nonumber\\
&=\frac{\pi^2B^2}{3\ln{2}}\frac{\rho\mu_2}{1+\rho}-\frac{\pi^4B^4}{5\ln{2}}\left[\frac{\rho\mu_4}{12(1+\rho)}+\frac{\rho(1-\rho)\mu_2^2}{4(1+\rho)^2}\right]+{\mathcal{O}}((B\Delta\tau)^6).\nonumber
\end{align}
Retaining the leading term and absorbing the fourth-order contribution into the remainder yields \eqref{rate_gap_second_order}.
This completes the proof.
\subsection{Asymptotic Rate under Joint Decoding}\label{proof_asym_joint_decoding}
The joint-decoding MI is given by $I({\mathbf s}_{1,N},\ldots,{\mathbf s}_{K,N};{\mathbf r}_N)
=
\log_2\det(
{\mathbf I}_N+\sum_{k=1}^{K}{\mathbf A}_{k,N}({\bm\psi})
)$, where ${\mathbf A}_{k,N}({\bm\psi})
=
\frac{P_k}{N_0}{\mathbf G}_{k,N}({\bm\psi}){\mathbf G}_{k,N}^{\mathsf H}({\bm\psi})$. For each user $k$, the matrix sequence $\{{\mathbf A}_{k,N}({\bm\psi})\}$ is asymptotically equivalent to a Toeplitz matrix sequence with generating symbol $A_k({\rm e}^{{\rm j}\omega};{\bm\psi})
=
\frac{P_k}{N_0}|\hat G_k({\rm e}^{{\rm j}\omega};{\bm\psi})|^2$ for $\omega\in[-\pi,\pi]$. Since the sum of Toeplitz matrices is again Toeplitz and asymptotic equivalence is preserved under addition \cite{gray2006toeplitz}, the matrix sequence $\{{\mathbf I}_N+\sum_{k=1}^{K}{\mathbf A}_{k,N}({\bm\psi})\}$ is asymptotically equivalent to a Toeplitz matrix sequence with symbol $1+\sum_{k=1}^{K}\frac{P_k}{N_0}|\hat G_k({\rm e}^{{\rm j}\omega};{\bm\psi})|^2$. Applying Szeg{\H{o}}'s theorem yields
\begin{align}\nonumber
{\mathsf R}_{\rm J}({\bm\psi})
=
\frac{1}{2\pi}\int_{-\pi}^{\pi}
\log_2\left(
1+\sum_{k=1}^{K}\frac{P_k|\hat G_k({\rm e}^{{\rm j}\omega};{\bm\psi})|^2}{N_0}
\right){\rm d}\omega.
\end{align}
Using the same Poisson-summation argument as in \eqref{poisson_summation_formula} and \eqref{eq:normalized_MI_sum_eigs_final_pre1}, we have $\hat G_k({\rm e}^{{\rm j}\omega};{\bm\psi})
=
G_k(f;{\bm\psi})$ for $f=\frac{\omega B}{2\pi}\in\left[-\frac{B}{2},\frac{B}{2}\right]$. Substituting this expression into ${\mathsf R}_{\rm J}({\bm\psi})$ and using the change of variable $\omega=\frac{2\pi f}{B}$, we obtain \eqref{joint_infinite_blocklength_rate_limit}. 

\subsection{Asymptotic Rate under Parallel Decoding}\label{proof_asym_parallel_decoding}
The achievable rate of user $k$ under parallel decoding is ${\mathsf R}_{k,{\rm P}}({\bm\psi})=\lim_{N\rightarrow\infty}\frac{1}{N}\log_2\det({\mathbf I}_N+{\mathbf A}_{k,N}({\bm\psi}){\mathbf Z}_{k,N}^{-1})$, where ${\mathbf A}_{k,N}({\bm\psi})=\frac{P_k}{N_0}{\mathbf G}_{k,N}({\bm\psi}){\mathbf G}_{k,N}^{\mathsf H}({\bm\psi})$ and ${\mathbf Z}_{k,N}= {\mathbf I}_N+\sum_{k'\neq k}{\mathbf{A}}_{k',N}({\bm\psi})$. For each user $k' \neq k$, the matrix sequence $\{{\mathbf A}_{k',N}({\bm\psi})\}$ is asymptotically equivalent to a Toeplitz matrix sequence with generating symbol $A_{k'}({\rm e}^{{\rm j}\omega};{\bm\psi})
=
\frac{P_{k'}}{N_0}|\hat G_{k'}({\rm e}^{{\rm j}\omega};{\bm\psi})|^2$ for $\omega\in[-\pi,\pi]$. Hence, $\{{\mathbf Z}_{k,N}\}$ is asymptotically equivalent to a Toeplitz matrix sequence with generating symbol $Z_k(\omega;{\bm\psi})
\triangleq
1+\sum_{k'\neq k}\frac{P_{k'}}{N_0}|\hat G_{k'}({\rm e}^{{\rm j}\omega};{\bm\psi})|^2$. Since $N_0>0$ and $P_{k'}\geq 0$, we have $Z_k(\omega;{\bm\psi})>0$ for all $\omega\in[-\pi,\pi]$. Therefore, by \cite[Theorem 11]{gray2006toeplitz}, the inverse sequence $\{{\mathbf Z}_{k,N}^{-1}\}$ is asymptotically equivalent to a Toeplitz matrix sequence generated by $\frac{1}{Z_k(\omega;{\bm\psi})}$.

Using again the product rule for asymptotically equivalent Toeplitz matrices \cite[Theorem 12]{gray2006toeplitz}, the matrix sequence $\{{\mathbf A}_{k,N}({\bm\psi}){\mathbf Z}_{k,N}^{-1}\}$ is asymptotically equivalent to a Toeplitz matrix sequence with generating symbol
\begin{align}\nonumber
\frac{A_k({\rm e}^{{\rm j}\omega};{\bm\psi})}{Z_k(\omega;{\bm\psi})}
=
\frac{|\hat G_k({\rm e}^{{\rm j}\omega};{\bm\psi})|^2P_k}
{N_0+\sum_{k'\neq k}|\hat G_{k'}({\rm e}^{{\rm j}\omega};{\bm\psi})|^2P_{k'}}.
\end{align}
Applying Szeg{\H{o}}'s theorem then yields
\begin{align}\label{eq:parallel_rate_omega_appendix}
{\mathsf R}_{k,{\rm P}}({\bm\psi})
=
\frac{1}{2\pi}\int_{-\pi}^{\pi}
\log_2\left(
1+\frac{A_k({\rm e}^{{\rm j}\omega};{\bm\psi})}{Z_k(\omega;{\bm\psi})}
\right){\rm d}\omega.
\end{align}
Using the same Poisson-summation argument as in \eqref{poisson_summation_formula} and \eqref{eq:normalized_MI_sum_eigs_final_pre1}, we have $\hat G_k({\rm e}^{{\rm j}\omega};{\bm\psi})
=
G_k(f;{\bm\psi})$ for $f=\frac{\omega B}{2\pi}\in\left[-\frac{B}{2},\frac{B}{2}\right]$. Substituting this expression into \eqref{eq:parallel_rate_omega_appendix} and using the change of variable $\omega=\frac{2\pi f}{B}$, we obtain \eqref{eq:parallel_rate_discrete}.
\end{appendix}
\bibliographystyle{IEEEtran}
\bibliography{mybib}

\begin{thebibliography}{10}
\providecommand{\url}[1]{#1}
\csname url@samestyle\endcsname
\providecommand{\newblock}{\relax}
\providecommand{\bibinfo}[2]{#2}
\providecommand{\BIBentrySTDinterwordspacing}{\spaceskip=0pt\relax}
\providecommand{\BIBentryALTinterwordstretchfactor}{4}
\providecommand{\BIBentryALTinterwordspacing}{\spaceskip=\fontdimen2\font plus
\BIBentryALTinterwordstretchfactor\fontdimen3\font minus
  \fontdimen4\font\relax}
\providecommand{\BIBforeignlanguage}[2]{{%
\expandafter\ifx\csname l@#1\endcsname\relax
\typeout{** WARNING: IEEEtran.bst: No hyphenation pattern has been}%
\typeout{** loaded for the language `#1'. Using the pattern for}%
\typeout{** the default language instead.}%
\else
\language=\csname l@#1\endcsname
\fi
#2}}
\providecommand{\BIBdecl}{\relax}
\BIBdecl

\bibitem{heath2025tri}
R.~W. Heath~Jr, J.~Carlson, N.~V. Deshpande, M.~R. Castellanos, M.~Akrout, and
  C.-B. Chae, ``The tri-hybrid {MIMO} architecture,'' \emph{IEEE Wireless
  Commun.}, vol.~33, no.~1, pp. 199--206, Feb. 2026.

\bibitem{zhu2024movable}
L.~Zhu, W.~Ma, and R.~Zhang, ``Movable antennas for wireless communication:
  Opportunities and challenges,'' \emph{IEEE Commun. Mag.}, vol.~62, no.~6, pp.
  114--120, Jun. 2024.

\bibitem{wong2020fluid}
K.-K. Wong, A.~Shojaeifard, K.-F. Tong, and Y.~Zhang, ``Fluid antenna
  systems,'' \emph{IEEE Trans. Wireless Commun.}, vol.~20, no.~3, pp.
  1950--1962, Mar. 2021.

\bibitem{shlezinger2021dynamic}
N.~Shlezinger, G.~C. Alexandropoulos, M.~F. Imani, Y.~C. Eldar, and D.~R.
  Smith, ``Dynamic metasurface antennas for {6G} extreme massive {MIMO}
  communications,'' \emph{IEEE Wireless Commun.}, vol.~28, no.~2, pp. 106--113,
  Apr. 2021.

\bibitem{wu2025intelligent}
Q.~Wu, G.~Chen, Q.~Peng, W.~Chen, Y.~Yuan, Z.~Cheng, J.~Dou, Z.~Zhao, and
  P.~Li, ``Intelligent reflecting surfaces for wireless networks: Deployment
  architectures, key solutions, and field trials,'' \emph{IEEE Wireless
  Commun.}, vol.~32, no.~6, pp. 141--148, Dec. 2025.

\bibitem{suzuki2022pinching}
A.~Fukuda, H.~Yamamoto, H.~Okazaki, Y.~Suzuki, and K.~Kawai, ``Pinching
  antenna: Using a dielectric waveguide as an antenna,'' \emph{NTT DOCOMO
  Technical J.}, vol.~23, no.~3, pp. 5--12, Jan. 2022.

\bibitem{liu2025pinching}
Y.~Liu, Z.~Wang, X.~Mu, C.~Ouyang, X.~Xu, and Z.~Ding, ``Pinching antenna
  systems {(PASS)}: Architecture designs, opportunities, and outlook,''
  \emph{IEEE Commun. Mag.}, vol.~64, no.~1, pp. 190--196, Jan. 2026.

\bibitem{liu2025pinchingtutorial}
Y.~Liu, H.~Jiang, X.~Xu, Z.~Wang, J.~Guo, C.~Ouyang, X.~Mu, Z.~Ding,
  A.~Nallanathan, G.~K. Karagiannidis \emph{et~al.}, ``Pinching-antenna systems
  ({PASS}): A tutorial,'' \emph{IEEE Trans. Commun.}, vol.~74, pp. 4881--4918,
  2026.

\bibitem{liu2025pinchingsurvey}
Y.~Liu, H.~Jiang, X.~Gan, X.~Xu, J.~Guo, Z.~Wang, C.~Ouyang, X.~Mu, Z.~Ding,
  A.~Nallanathan, O.~A. Dobre, G.~K. Karagiannidis, and R.~Schober, ``A survey
  of pinching-antenna systems ({PASS}),'' \emph{arXiv preprint
  arXiv:2601.18927}, 2026.

\bibitem{lu2026survey}
Y.~Lu, W.~Mao, X.~Xie, S.~Liao, Y.~Xu, Y.~Lin, R.~Zhang, B.~Ai, X.~Wang, and
  Z.~Ding, ``A survey on pinching-antenna systems: Optimization techniques,
  intelligent designs, and multifunctional applications,'' \emph{IEEE Commun.
  Surveys Tuts.}, vol.~28, pp. 6592--6627, 2026.

\bibitem{xu2026generalized}
Y.~Xu, J.~Cui, Y.~Zhu, Z.~Ding, T.-H. Chang, R.~Schober, V.~W. Wong, O.~A.
  Dobre, G.~K. Karagiannidis, H.~V. Poor \emph{et~al.}, ``Generalized
  pinching-antenna systems: A tutorial on principles, design strategies, and
  future directions,'' \emph{IEEE Commun. Surveys Tuts.}, vol.~28, pp.
  5872--5908, 2026.

\bibitem{beas2013millimeter}
J.~Beas, G.~Castanon, I.~Aldaya, A.~Arag{\'o}n-Zavala, and G.~Campuzano,
  ``Millimeter-wave frequency radio over fiber systems: a survey,'' \emph{IEEE
  Commun. Surveys Tuts.}, vol.~15, no.~4, pp. 1593--1619, 4th Quart. 2013.

\bibitem{wijewardhana2026wireless}
K.~R. Wijewardhana, A.~Yadav, M.~Zeng, M.~Elsayed, O.~A. Dobre, and Z.~Ding,
  ``Wireless-fed pinching-antenna systems ({Wi-PASS}) for {NextG} wireless
  networks,'' \emph{IEEE Commun. Mag.}, Early Access, 2026.

\bibitem{ouyang2025uplink}
C.~Ouyang, H.~Jiang, Z.~Wang, Y.~Liu, and Z.~Ding, ``Uplink and downlink
  communications in segmented waveguide-enabled pinching-antenna systems
  ({SWANs}),'' \emph{IEEE Trans. Commun.}, vol.~74, pp. 3688--3703, 2026.

\bibitem{wang2018spatial}
B.~Wang, F.~Gao, S.~Jin, H.~Lin, and G.~Y. Li, ``Spatial-and frequency-wideband
  effects in millimeter-wave massive {MIMO} systems,'' \emph{IEEE Trans. Signal
  Process.}, vol.~66, no.~13, pp. 3393--3406, Jul. 2018.

\bibitem{xiao2025frequency}
J.~Xiao, J.~Wang, M.~Zeng, Y.~Liu, and G.~K. Karagiannidis,
  ``Frequency-selective modeling and analysis for {OFDM}-integrated wideband
  pinching-antenna systems,'' \emph{IEEE Wireless Commun. Lett.}, vol.~14,
  no.~11, pp. 3500--3504, Nov. 2025.

\bibitem{wang2025modeling}
Z.~Wang, C.~Ouyang, X.~Mu, Y.~Liu, and Z.~Ding, ``Modeling and beamforming
  optimization for pinching-antenna systems,'' \emph{IEEE Trans. Commun.},
  vol.~73, no.~12, pp. 13\,904--13\,919, Dec. 2025.

\bibitem{gallager2008principles}
R.~G. Gallager, \emph{Principles of Digital Communication}.\hskip 1em plus
  0.5em minus 0.4em\relax Cambridge, U.K.: Cambridge Univ. Press, 2008.

\bibitem{gazzah2001asymptotic}
H.~Gazzah, P.~A. Regalia, and J.-P. Delmas, ``Asymptotic eigenvalue
  distribution of block {Toeplitz} matrices and application to blind {SIMO}
  channel identification,'' \emph{IEEE Trans. Inf. Theory}, vol.~47, no.~3, pp.
  1243--1251, Mar. 2001.

\bibitem{ouyang2024primer}
C.~Ouyang, Z.~Wang, Y.~Chen, X.~Mu, and P.~Zhu, ``A primer on near-field
  communications for next-generation multiple access,'' \emph{Proc. {IEEE}},
  vol. 112, no.~9, pp. 1527--1565, Sep. 2024.

\bibitem{yeh2008essence}
C.~Yeh and F.~I. Shimabukuro, \emph{The Essence of Dielectric
  Waveguides}.\hskip 1em plus 0.5em minus 0.4em\relax New York, NY, USA:
  Springer, 2008.

\bibitem{ding2024flexible}
Z.~Ding, R.~Schober, and H.~V. Poor, ``Flexible-antenna systems: A
  pinching-antenna perspective,'' \emph{IEEE Trans. Commun.}, vol.~73, no.~10,
  pp. 9236--9253, Oct. 2025.

\bibitem{ouyang2025array}
C.~Ouyang, Z.~Wang, Y.~Liu, and Z.~Ding, ``Array gain for pinching-antenna
  systems ({PASS}),'' \emph{IEEE Commun. Lett.}, vol.~29, no.~6, pp.
  1471--1475, Jun. 2025.

\bibitem{xu2024rate}
Y.~Xu, Z.~Ding, and G.~K. Karagiannidis, ``Rate maximization for downlink
  pinching-antenna systems,'' \emph{IEEE Wireless Commun. Lett.}, vol.~14,
  no.~5, pp. 1431--1435, May 2025.

\bibitem{gray2003asymptotic}
R.~M. Gray, ``On the asymptotic eigenvalue distribution of {Toeplitz}
  matrices,'' \emph{IEEE Trans. Inf. Theory}, vol.~18, no.~6, pp. 725--730,
  Nov. 1972.

\bibitem{gray2006toeplitz}
------, ``Toeplitz and circulant matrices: A review,'' \emph{Found. Trends
  Commun. Inf. Theory}, vol.~2, no.~3, pp. 155--239, 2006.

\bibitem{xu2026pinching}
Y.~Xu, Z.~Ding, R.~Schober, and T.-H. Chang, ``Pinching-antenna systems with
  in-waveguide attenuation: Performance analysis and algorithm design,''
  \emph{IEEE Trans. Wireless Commun.}, vol.~25, pp. 14\,564--14\,580, 2026.

\bibitem{steele2004cauchy}
J.~M. Steele, \emph{The Cauchy-Schwarz Master Class: An Introduction to the Art
  of Mathematical Inequalities}.\hskip 1em plus 0.5em minus 0.4em\relax New
  York, NY, USA: Cambridge Univ. Press, 2004.

\bibitem{verdu1989multiple}
S.~Verdu, ``Multiple-access channels with memory with and without frame
  synchronism,'' \emph{IEEE Trans. Inf. Theory}, vol.~35, no.~3, pp. 605--619,
  May 1989.

\bibitem{heath2018foundations}
R.~W. Heath~Jr and A.~Lozano, \emph{Foundations of {MIMO} Communication}.\hskip
  1em plus 0.5em minus 0.4em\relax Cambridge, U.K.: Cambridge Univ. Press,
  2018.

\bibitem{hassibi2003much}
B.~Hassibi and B.~M. Hochwald, ``How much training is needed in
  multiple-antenna wireless links?'' \emph{IEEE Trans. Inf. Theory}, vol.~49,
  no.~4, pp. 951--963, Apr. 2003.

\bibitem{falconer2002frequency}
D.~Falconer, S.~L. Ariyavisitakul, A.~Benyamin-Seeyar, and B.~Eidson,
  ``Frequency domain equalization for single-carrier broadband wireless
  systems,'' \emph{IEEE Commun. Mag.}, vol.~40, no.~4, pp. 58--66, Apr. 2002.

\bibitem{pancaldi2008single}
F.~Pancaldi, G.~M. Vitetta, R.~Kalbasi, N.~Al-Dhahir, M.~Uysal, and H.~Mheidat,
  ``Single-carrier frequency domain equalization,'' \emph{IEEE Signal Process.
  Mag.}, vol.~25, no.~5, pp. 37--56, Sep. 2008.

\bibitem{polyanskiy2010channel}
Y.~Polyanskiy, H.~V. Poor, and S.~Verd{\'u}, ``Channel coding rate in the
  finite blocklength regime,'' \emph{IEEE Trans. Inf. Theory}, vol.~56, no.~5,
  pp. 2307--2359, May 2010.

\bibitem{grenander1958toeplitz}
U.~Grenander and G.~Szeg{\H{o}}, \emph{Toeplitz Forms and Their
  Applications}.\hskip 1em plus 0.5em minus 0.4em\relax Oakland, CA, USA: Univ.
  of California Press, 1958.

\bibitem{oppenheim1999discrete}
A.~V. Oppenheim and R.~W. Schafer, \emph{Discrete-Time Signal Processing},
  3rd~ed.\hskip 1em plus 0.5em minus 0.4em\relax Upper Saddle River, NJ, USA:
  Pearson, 2010.

\end{thebibliography}
\end{document}